\documentclass[journal,twoside]{IEEEtran}

\usepackage{cite,url}
\usepackage{booktabs}
\usepackage{amsmath,amssymb,bm}
\usepackage{amsthm}
\usepackage{color}
\usepackage{graphicx}
\usepackage[lined,boxed,commentsnumbered, ruled, linesnumbered]{algorithm2e}
\usepackage{algpseudocode}

\usepackage{booktabs}
\usepackage{subfigure}
\usepackage{amsfonts}
\usepackage{threeparttable}

\newtheorem{lemma}{Lemma}

\newtheorem{corollary}{Corollary}

\hyphenation{op-tical net-works semi-conduc-tor}

\begin{document}

\title{{Joint Rate Control and Power Allocation for Non-Orthogonal Multiple Access Systems}
\thanks{W. Bao is with the School of Information Technologies, The University of Sydney, Sydney, NSW 2006, Australia (email: wei.bao@sydney.edu.au). H. Chen, Y. Li, and B. Vucetic are with School of Electrical and Information Engineering, The University of Sydney, Sydney, NSW 2006, Australia (email: he.chen@sydney.edu.au, yonghui.li@sydney.edu.au, branka.vucetic@sydney.edu.au).}
}
\author{Wei Bao, He Chen,  Yonghui Li, and Branka Vucetic}

\maketitle

\begin{abstract}
This paper investigates the optimal resource allocation of a downlink non-orthogonal multiple access (NOMA) system consisting of one base station and multiple users. Unlike existing short-term NOMA designs that focused on the resource allocation for only the current transmission timeslot, we aim to maximize a long-term network utility by jointly optimizing the data rate control at the network layer and the power allocation among multiple users at the physical layer, subject to practical constraints on both the short-term and long-term power consumptions. To solve this problem, we leverage the recently-developed Lyapunov optimization framework to convert the original long-term optimization problem into a series of online rate control and power allocation problems in each timeslot. The power allocation problem, however, is shown to be non-convex in nature and thus cannot be solved with a standard method. However, we explore two structures of the optimal solution and  develop a dynamic programming based power allocation algorithm, which can derive a globally optimal solution, with a polynomial computational complexity. Extensive simulation results are provided to evaluate the performance of the proposed joint rate control and power allocation framework for NOMA systems, which demonstrate that the proposed NOMA design can significantly outperform multiple benchmark schemes, including orthogonal multiple access (OMA) schemes with optimal power allocation and NOMA schemes with non-optimal power allocation, in terms of average throughput and data delay.
\end{abstract}
\begin{IEEEkeywords}
Non-orthogonal multiple access, rate control, power allocation, Lyapunov optimization.
\end{IEEEkeywords}

%
\section{Introduction}
Due to the explosive traffic growth and the fast proliferation of Internet of Things (IoT), the fifth generation (5G) of cellular networks are expected to face unprecedented challenges including 1000-fold increase in system capacity, improved spectral efficiency, and massive connectivity with diverse service requirements~\cite{Andrews2014JSAC,Wunder2014CM}. In this context, non-orthogonal multiple access (NOMA), although not completely new to the wireless industry and research community \cite{Vanka12super}, has been regarded as a promising radio access technology for the 5G wireless communication systems~\cite{Dai2015CM,ding2015application,Islam2016survey, wei2016survey}, due to its unique capability of achieving a higher spectral efficiency and supporting a large number of concurrent transmissions over the same communication resource. In fact, multiuser superposition transmission (MUST), a two-user downlink scenario of NOMA, has been investigated for the third generation partnership project long-term evolution advanced (3GPP-LTEA) networks~\cite{Lee2016ICC}.

The current fourth generation (4G) of cellular communication systems and previous generations primarily adopted orthogonal multiple access (OMA) technologies, such as frequency-division multiple access (FDMA), time-division multiple access (TDMA), code-division multiple access (CDMA), and orthogonal frequency-division multiple access (OFDMA). In these OMA schemes, the resources are first split into orthogonal resource blocks in frequency/time/code domain and each resource block is then assigned to one user exclusively. As such, the inter-user interference can be avoided and the information of each user can be recovered at a low complexity. However, due to the orthogonal resource allocation, the maximum number of served users is limited by the total number of available resource blocks. In this sense, the OMA techniques are inadequate to support the massive connectivity requirement of 5G wireless systems. Another main problem of the OMA schemes is their relatively low spectral efficiency, especially for the resource blocks assigned to the users with poor instantaneous channel conditions. This issue can be effectively addressed by applying user selection schemes, where the users with strong channel conditions are selected out to transmit over the limited number resources. However, this may pose a serious user fairness problem as the users near to the cell edge could have much fewer transmission opportunities than those near to the cell center.

Unlike OMA with orthogonal resource allocation, NOMA advocates the usage of the power domain to multiplex signal streams from multiple users together and serve them simultaneously using the same frequency/time/code resource block. At the transmitter, NOMA adopts the superposition coding~\cite{Cover1972tit} to superimpose the signals of multiple users together by splitting them in the power domain. At the receiver side, successive interference cancellation (SIC)~\cite{Patel1994jsac} is implemented to separate multiplexed users' signals. In this case, each user can access to all resource blocks such that those resources that are solely assigned to users with poor channel quality in OMA can still be accessed by other users with good channel conditions in NOMA, which enables NOMA to achieve a higher spectral efficiency than OMA \cite{tse2005fundamentals}. Apart from this, NOMA is capable of realizing an improved tradeoff between system throughput and user fairness than OMA. This can be achieved by allocating less power to users with better channel conditions and more power to users with worse channel conditions, which is totally opposite to the conventional water-filling power allocation scheme widely used in OMA.

\subsection{Related Work}
The idea of using NOMA as a potential candidate for 5G multiple access technology was first proposed in~\cite{Saito2013vtc,Saito2013pimrc} by NTT DoCoMo, as a part of the mobile and wireless communications enablers for the twenty-twenty information society (METIS) projects. The system-level performance of NOMA was evaluated by simulation in \cite{Benjebbovu2013gcw} by taking into consideration many practical factors, including multiuser power allocation, signaling overhead, SIC error propagation and high user mobility. A test-bed of two-user downlink NOMA system was developed and evaluated in \cite{Benjebbour2015non}, where a NOMA scheme with each user occupying the whole bandwidth of 5.4 MHz was compared to an OMA scheme with each user using the transmission bandwidth of 2.7 MHz. Experimental results demonstrated that the NOMA scheme significantly outperforms its OMA counterpart in terms of both aggregate and individual user's throughput.

Since NOMA uses the power domain to realize multiple access and implements SIC to perform user decoding, the power allocation to the data flows of different users plays an important role in determining the performance of NOMA systems. Specifically, the total power of the transmitter should be allocated in a proper way so that the signals for the users with worse channel conditions can be successfully decoded and subsequently subtracted from the received signal of those users with better channel conditions. Early efforts on NOMA have mostly adopted the fixed power allocation scheme, in which the power allocation coefficients are pre-determined and are not affected by the instantaneous channel conditions~\cite{Ding2014SPL,Choi2014cl,Ding2015cl,Sun2015wcl,Ding2016twc}. The fixed power allocation strategy is favorable in terms of low implementation complexity, however, the system performance can be disadvantaged if the power allocation coefficients are not set appropriately. Inspired by the key idea of cognitive radio, another power allocation scheme was proposed and analyzed in~\cite{Ding2016tvt}, in which NOMA is treated as a special case of cognitive radio networks. In this scheme, the user with a weaker channel condition is regarded as a primary user, who has a higher priority to be served. That is, the available power is first allocated to fulfill the quality of service of the primary user, while the secondary user with a better channel condition will be opportunistically served using the remaining amount of power. In this case, the overall system spectral efficiency could be limited by the fact that the strong user's performance is highly affected by the weak user's channel quality. Recently, Yang \emph{et al.} proposed a general power allocation scheme to protect the QoS of both users, in which the power allocation factors are dynamically adjusted according to the instantaneous channel status to ensure that the rates of two users in NOMA are both larger than those in OMA~\cite{Yang2016cl,Yang2016twc}.

The aforementioned power allocation strategies were primarily designed for two-user NOMA scenarios. In general, more than two users can be multiplexed on the same resources, where the power allocation problem becomes more complicated. In \cite{Timotheou2015spl}, the dynamic power allocation scheme for a multi-user NOMA system was addressed with max-min user fairness for the cases with and without instantaneous channel state information (CSI). More specifically, the objective was set to maximize the minimum achievable user rate when the instantaneous CSI is available, and to minimize the maximum user outage probability when only average CSI is known. For the case without instantaneous CSI, a sum power minimization problem was resolved in \cite{Cui2016spl} subject to outage probabilistic constraints and the optimal decoding order. The work of \cite{Datta2016wcnc} provided an optimal power allocation solution to maximize the weighted sum rate of all users subject to a total power constraint.

When NOMA is applied in a multi-carrier system, the power allocation problem will be upgraded to a joint power and channel allocation problem. There have been some great efforts in this research line~\cite{Lei2015gc,Lei2016twc,Di2016twc,Zhang2016icc,Li2016cl,wei2016power,Sun2016tcom}. The joint power and channel allocation problem of multi-carrier NOMA systems was formally formulated with both maximum weighted sum rate and sum-rate utilities in \cite{Lei2015gc,Lei2016twc}, in which the problem was proved to be NP hard and a new algorithm inspired by Lagrangian duality and dynamic programming was proposed to achieve a near-optimal solution. In \cite{Di2016twc,Zhang2016icc}, the joint power and channel allocation problem was resolved by decoupling it as a many-to-many matching game with externalities and a geometric programming subproblem. Apart from the maximization of the sum rate or weighted sum rate, the total transmit power minimization problems were addressed subject to the predefined QoS of individual users in \cite{Li2016cl,wei2016power}. Very recently, Sun \emph{et al.} studied a full-duplex multi-carrier NOMA system, wherein the monotonic optimization was adopted to design an optimal joint power and subcarrier allocation policy such that the weighted sum system throughput is maximized.

\subsection{Motivation and Contributions}
We notice that most existing dynamic resource allocation strategies for NOMA systems have mainly focused on short-term (e.g., ``one-snapshot'') designs. Specifically, the system performance is normally optimized subject to a short-term peak power constraint for a single transmission block, overlooking the practical long-term power constraint. These short-term designs may lead to inferior system performance in a long-term perspective since the power resources are forced to be used even when the channel condition is not good enough in some transmission blocks. Furthermore, in practice, the amount of data that can be transmitted at the physical layer is highly influenced by the rate control at the network layer. The limited transmit power should not be wasted to those users who have little data to send/receive. In this sense, to achieve a higher long-term system throughput with limited resources, it is desirable to jointly design the rate control at the network layer and the resource allocation at the physical layer for NOMA systems. However, to our best knowledge, this important problem has not been considered in the open literature.

Motivated by this gap, in this paper we investigate the joint rate control and power allocation of a downlink NOMA system with one base station (BS) communicating to multiple users. Different from existing short-term resource allocation policies, we aim to maximize the long-term network utility, defined as a sum of concave functions of average data rates of all users, subject to a peak power constraint and a long-term average power constraint in the physical layer, as well as a peak rate constraint in the network layer. In the proposed design, we implement data queues at the BS to characterize users' incoming data flows from the Internet and outgoing data flows to the wireless channel. Note that it is desirable to jointly consider the data rate control and power optimization problems due to the fact that the amount of data delivered to each user in a certain timeslot is limited by not only the instantaneous channel capacity but also the amount of data available in the queue. Furthermore, the considered short-term peak power constraint and long-term average power constraint make the rate control and power allocation of different users tangled not only in each timeslot but also across various timeslots in the long term. In this sense, developing an online algorithm that can maximize the long-term network utility through optimizing the coupled data rate control and power allocation in each timeslot subject to both short-term and long-term power constraints is a challenging task.

The main contributions of this paper are summarized as follows:
\begin{itemize}
  \item We, for the first time, develop a systematic framework for the joint rate control and power allocation in NOMA systems. The formulated long-term optimization problem is fundamentally different from those short-term designs in the open literature since it is subject to \textit{long-term} average and \textit{short-term} peak power constraints, the network-layer peak rate constraints, and the queue stability constraints. As such, it cannot be resolved by the existing approaches. Motivated by this, we leverage the recently-developed Lyapunov optimization approach \cite{neely2006stochastic,neely2010stochastic}, which enables us to achieve the asymptotic optimality of the formulated problem. This is realized by converting the original long-term optimization problem to a series of online queue- and channel-aware optimization problems to be resolved at each single timeslot. Here, ``online'' means that the BS only needs to know the queue states and channel states at the current timeslot, without predicting anything in the future or knowing the statistics of the channel states.
  \item The online queue- and channel-aware optimization problem is then decomposed into rate control and power allocation problems in each timeslot. However, the power allocation problem is non-convex in nature and cannot be solved with a standard method. Instead, we explore two important structures in solving the problem.
      Referred to as the \textit{finite-point structure}, we show that the optimality only possibly happens at a finite set of candidate points,  which substantially decreases the searching space. Referred to as the \textit{incremental structure}, we derive the recursion of the original optimization problem and its subproblems, leading to the desired Bellman equation. As a consequence, we propose a new dynamic programming based power allocation (DPPA) approach to derive a globally optimal solution,  within a polynomial computational complexity.

  \item We evaluate the performance of our proposed framework for joint rate control and power allocation for NOMA systems via simulation and show that NOMA can greatly improve the network performance in both data rate and delay compared with four benchmark schemes, including OMA and non-optimal NOMA schemes, under a variety of system settings.

\end{itemize}

\subsection{Organization}
The rest of the paper is organized as follows. The system model and problem formulation are described in Section II, where the Lyapunov optimization approach is applied to convert the original long-term optimization problem into a series of rate control and power allocation problems to be optimized in each timeslot. In Section III, we elaborate the two special structures of the power allocation problems and develop a new dynamic programming based power allocation algorithm, which is guaranteed to achieve its global optimality. Simulation results and the associated discussions are presented in Section IV, and the conclusions of the paper are drawn in Section V.

\section{System Model and Lyapunov Optimization}\label{section_Lyapunov}

\begin{figure}[t]
\centering  \hspace{0pt}
\includegraphics[scale=0.4]{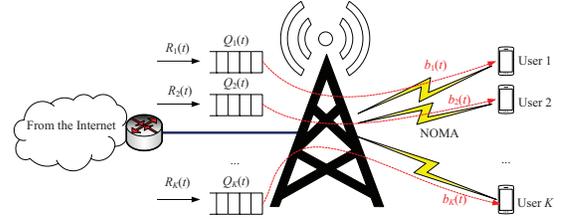}
\vspace{-0.2 cm}
\caption{System model of the considered NOMA system.}
\label{figure_model}
\end{figure}

\subsection{System Model}
As shown in Fig.~\ref{figure_model}, we consider the downlink data transmission from one BS to $K$ users using NOMA \cite{Ding2014SPL, Ding2016twc, Timotheou2015spl, Cui2016spl, Lei2016twc, Di2016twc}. All the nodes are equipped with single antenna and work in a half-duplex mode. The system operates on slotted time $t\in\{0,1,2,\ldots\}$.
At timeslot $t$, $R_i(t)$ bits of data for user $i$ arrive at the BS from the Internet. These data are firstly buffered at queue $i$ and then forwarded to user $i$ via a wireless channel. Let $Q_i(t)$ denote the amount of data buffered at queue $i$ at time $t$, i.e., queue backlog of the user $i$. We assume that the queue is large enough so that no data will be dropped. Let $b_i(t)$ denote the amount of data that can be delivered to the user $i$ at $t$ (i.e., data transmission capability offered by the underlying wireless channel).  $R_i(t)$ and $b_i(t)$ are variables to be designed by the system. $R_i(t)$ indicates how many data should be sent to the BS in the view of the network layer, and $b_i(t)$ indicates how many data should be sent from the BS to user $i$ via the wireless link in the view of the physical layer. Since the data rate of one user cannot be arbitrarily large in the network, we consider that $R_i(t)$ is limited to $R_{\max}$.\footnote{This value is imposed by an upper-layer protocol (e.g.,
TCP) for each user. This is because the upper layer protocol usually maintains a buffer to send
its data, and maximum possible data can be sent in each timeslot is limited by the buffer size. This assumption is widely
adopted in the literature that focuses on network flow control, such as \cite[Section 5.2]{neely2010stochastic} and
\cite[Section 5.2.4]{neely2006stochastic}.} $b_i(t)$ is also limited by finite transmission power, which will be elaborated shortly. Mathematically, we can write the following expression to characterize the evolution of each user's  queue backlog
%
%
\begin{align}
Q_i(t+1)=[Q_i(t)-b_i(t)]^{+}+R_i(t),
\end{align}
where $[\cdot]^+$ means $\max[\cdot, 0]$.

To improve the spectral efficiency, in this paper we consider that the BS adopts the NOMA method to transmit data to users in the downlink. As the first effort towards the long-term resource allocation designs for NOMA, here we concentrate on a single-carrier NOMA system such that the data transmissions to all users work on the same frequency band. The  proposed long-term resource allocation framework can be extended to a multi-carrier NOMA system, which we would like to consider as future work. In addition, it is assumed that channels from the BS to the users experience quasi-static and frequency flat fading such that the channel gains remain constant during each timeslot but change independently from one timeslot to another. We use $\widehat{g}_i(t)$ to denote the channel gain from the BS to user $i$ during timeslot $t$.  $\widehat{g}_i(t)$ are independent in different timeslots, but $\widehat{g}_i(t)$ and $\widehat{g}_j(t)$, $i\neq j$, can be dependent in one timeslot. As explained later, in our framework we do not need to know the probability distribution of $\widehat{g}_j(t)$. Let $p_i(t)$ denote the power allocated to transmit the data of user $i$ at $t$. Let $\widehat{\mathbf{g}}(t)\triangleq (\widehat{g}_1(t),\widehat{g}_2(t),\ldots, \widehat{g}_K(t))$ and $\mathbf{p}(t)\triangleq (p_1(t),p_2(t),\ldots, p_K(t))$. $\mathbf{p}(t)$ is a power allocation variable that determines how the power is allocated to multiple users at $t$. As we will see shortly, the values of $b_i(t)$ depend on channel gains and power allocations at $t$.

We define $g_{s(1)}(t), g_{s(2)}(t),\ldots, g_{s(K)}(t)$ as sorted $\widehat{g}_1(t),\ldots,\widehat{g}_K(t)$ in a descending order, where $s(i)$ indicates the index of the user with the $i$th largest channel gain, i.e., subscript of the user before sorting. According to the principle of NOMA with superposition coding and SIC, we can compute the amount of data that can be sent to user $s(i)$ (i.e., $b_{s(i)}(t)$) based on the channel gains and power allocation given below \cite{Timotheou2015spl,Lei2016twc}:
\begin{align}\label{formula_NOMA0}
\begin{cases}
b_{s(1)}(t)=&W\tau\log\left(1+\frac{p_{s(1)}(t)g_{s(1)}(t)}{\eta}\right),\\
b_{s(2)}(t)=&W\tau\log\left(1+\frac{p_{s(2)}(t)g_{s(2)}(t)}{p_{s(1)}(t)g_{s(2)}(t)+\eta}\right),\\
\ldots \\
b_{s(K)}(t)=&W\tau\log\left(1+\frac{p_{s(K)}(t)g_{s(K)}(t)}{\sum_{j=1}^{K-1}p_{s(j)}(t)g_{s(K)}(t)+\eta}\right),
\end{cases}
\end{align}
where  $W$ is the radio frequency bandwidth, $\tau$ is the duration of one timeslot, and $\eta$ is the noise level. $W$,  $\tau$, and $\eta$  are predetermined.


In reality, the BS transmission power can be constrained by two practical limitations. One could be imposed by the requirement for power savings, which limits the long-term average power consumption at the BS. The other may come from the regulations and rules (e.g., Federal Communication Commission (FCC)), which restricts the short-term transmission power at the BS. Motivated by this, we consider that the NOMA system is subject to a long-term average power constraint\footnote{The long-term average power constraint at the BS could be imposed by the mobile operator to control its electricity expense. It could also be imposed by the government agency to limit the energy consumption of BS to reduce its carbon emission on average. It is worth mentioning that this type of average power constraint has been commonly considered in existing literature that focused on the optimization of long-term performance of wireless systems, see e.g., \cite{Lau10TWC,Wei16TON}.},
\begin{align}\label{formula_queue_pconstraint1}
\lim_{T\rightarrow\infty}\frac{1}{T}\sum_{t=0}^{T-1}\mathbb{E}\left[\sum_{i=1}^{K}p_i(t)\right]\leq P_{\mathrm{mean}},
\end{align}
as well as a one-timeslot peak power constraint,
\begin{align}\label{formula_queue_pconstraint2}
\sum_{i=1}^{K}p_i(t)\leq P_{\max}, \forall t,
\end{align}
where $P_{\mathrm{mean}}$  and $P_{\max}$ are the maximum allowed long-term average power usage, and maximum allowed instantaneous power usage in each timeslot, respectively. Please note that the existing resource allocation designs for NOMA normally considered the short-term power constraint, but overlooked the long-term power constraint.

\subsection{Problem Statement}
To proceed, we first define the long-term network utility as follows
\begin{align}
\mathcal{U}=\sum_{i=1}^{K}U_i\left( \lim_{T\rightarrow\infty} \frac{1}{T}\sum_{t=0}^{T-1}\mathbb{E}[R_i(t)]\right),
\end{align}
where $U_i(\cdot)$ is the utility function of user $i$. It is an arbitrary concave non-decreasing function. User $i$'s utility is a function of its long-term average data rate, and the overall network utility is the sum of the individual user utilities.

The aim of this paper is to optimize the long-term network utility, by designing the data rate control in the network layer, $\mathbf{R}(t)$, and the power allocation in the physical layer, $\mathbf{p}(t)$. The problem is formally stated as the following Problem Origin (Problem PO):
\begin{subequations}
\begin{align}
\label{formula_PO_obj}\max_{\mathbf{R}(t), \mathbf{p}(t),\forall t} \ &\sum_{i=1}^{K}U_i\left( \lim_{T\rightarrow\infty}\frac{1}{T}\sum_{t=0}^{T-1}\mathbb{E}[R_i(t)]\right),   \\
\label{formula_PO_con1} \textrm{subject to }&\sum_{i=1}^{K}p_i(t)\leq P_{\max}, \forall t, \\
\label{formula_PO_con2}&p_i(t)\geq 0, \forall i,t, \\
\label{formula_PO_con3}&\lim_{T\rightarrow\infty}\frac{1}{T}\sum_{t=0}^{T-1}\mathbb{E}\left[\sum_{i=1}^{K}p_i(t)\right]\leq P_{\mathrm{mean}},\\
\label{formula_PO_con4}&0\leq R_i(t)\leq R_{\max}, \forall i,t, \\
\label{formula_PO_con5}& \lim_{t\rightarrow \infty}\frac{\mathbb{E}[Q_i(t)]}{t}=0, 
\end{align}
\end{subequations}
where $\mathbf{R}(t)\triangleq (R_1(t), R_2(t),\ldots, R_K(t))$. The last constraint $\lim_{t\rightarrow \infty}\frac{\mathbb{E}[Q_i(t)]}{t}=0$ means that $Q_i(t)$ is \textit{mean rate stable}. Its physical meaning is that the queue will not grow to infinity after sufficiently long period. Note that due to the considered queueing model given in (1) and the long-term average power constraint given in (3), the rate control and power allocation of all users across different timeslots are actually tangled together. As such, the optimal solution to the formulated optimization problem (6) cannot be achieved by simply maximizing the data rate in each timeslot.


\subsection{Lyapunov Optimization}
The Problem PO perfectly matches the Lyapunov optimization framework \cite{neely2006stochastic,neely2010stochastic}. This is because we aim to maximize a long-term utility under a long-term constraint as well as queue stability constraints. It can be shown that the long-term power constraint (\ref{formula_PO_con3}) can be converted to a virtual queue stable constraint \cite{neely2006stochastic, neely2010stochastic}. We define a virtual queue $Z(t)$, with $Z(0)=0$ and update equation
\begin{align}\label{formula_virtualqueue}
Z(t+1)=\left[Z(t)+\sum_{i=1}^{K}p_i(t)-P_{\mathrm{mean}}\right]^{+}.
\end{align}
With reference to \cite[Section 4.4]{neely2010stochastic}, (\ref{formula_PO_con3}) can be guaranteed if virtual queue $Z(t)$ is mean rate stable. As such, we replace condition (\ref{formula_PO_con3}) by ``$Z(t)$ is mean rate stable''. In fact, $Z(t)$ indicates the accumulated power debt at $t$: On average, the amount of $P_{\mathrm{mean}}$ power can be consumed in each timeslot. Using power more than $P_{\mathrm{mean}}$ generates power debt, and using power less than $P_{\mathrm{mean}}$ repays the power debt. The power debt must be maintained finite over infinite timeslots, so that the long-term power constraint is satisfied.

%

Subsequently, we study the queue evolution of the system. Let $\mathbf{Q}(t)\triangleq (Q_1(t), Q_2(t),\ldots, Q_K(t))$, and $\mathbf{\Theta}(t)=(\mathbf{Q}(t), Z(t))$. The Lyapunov function $\mathbf{L}(\mathbf{\Theta}(t))$ then becomes \cite{neely2006stochastic, neely2010stochastic}
\begin{align}
\mathbf{L}(\mathbf{\Theta}(t))\triangleq \frac{1}{2}\left[\sum_{i=1}^{K}Q_{i}^2(t)+Z^2(t)\right].
\end{align}
The Lyapunov drift is then defined as
\begin{align}
\Delta(\mathbf{\Theta}(t))\triangleq \mathbb{E}\left[\mathbf{L}(\mathbf{\Theta}(t+1))-\mathbf{L}(\mathbf{\Theta}(t))|\mathbf{\Theta}(t)\right].
\end{align}

Since we aim to maximize a long-term utility as well as to guarantee the queue stability, we focus on the following drift minus utility (drift plus penalty \cite{neely2006stochastic,neely2010stochastic})
\begin{align}
 \mathrm{DPP}(\mathbf{\Theta}(t))=\Delta(\mathbf{\Theta}(t))-V\mathbb{E}\left[\sum_{i=1}^{K}U_i(R_i(t))\Big|\mathbf{\Theta}(t)\right],
\end{align}
where $V$ is a tunable weight, representing the relative importance of ``utility maximization'' compared with ``queue stability''. Then $\mathrm{DPP}(\mathbf{\Theta}(t))$ can be further bounded in the following way \cite[Section 3.1.2]{neely2010stochastic}
\begin{footnotesize}
\begin{align}
\nonumber &\mathrm{DPP}(\mathbf{\Theta}(t))\leq
\mathbb{E}\left[\sum_{i=1}^{K}\frac{R_i^2(t)+b_i^2(t)}{2}\Big|\mathbf{\Theta}(t)\right]+
\mathbb{E}\left[\sum_{i=1}^{K}Q_i(t)R_i(t)\Big|\mathbf{\Theta}(t)\right]\\
\nonumber&-\mathbb{E}\left[\sum_{i=1}^{K}Q_i(t)b_i(t)\Big|\mathbf{\Theta}(t)\right]+\mathbb{E}\left[\frac{\left(\sum_{i=1}^{K}p_i(t)-P_{\mathrm{mean}}\right)^2}{2}\Big|\mathbf{\Theta}(t)\right]\\
\nonumber&+\mathbb{E}\left[Z(t)\left(\sum_{i=1}^{K}p_i(t)-P_{\mathrm{mean}}\right)\Big|\mathbf{\Theta}(t)\right]-V\mathbb{E}\left[\sum_{i=1}^{K}U_i(R_i(t))\Big|\mathbf{\Theta}(t)\right]\\
\nonumber&\leq B+
\mathbb{E}\left[\sum_{i=1}^{K}Q_i(t)R_i(t)\Big|\mathbf{\Theta}(t)\right]-\mathbb{E}\left[\sum_{i=1}^{K}Q_i(t)b_i(t)\Big|\mathbf{\Theta}(t)\right]
\\
&+\mathbb{E}\left[Z(t)\left(\sum_{i=1}^{K}p_i(t)-P_{\mathrm{mean}}\right)\Big|\mathbf{\Theta}(t)\right] -V\mathbb{E}\left[\sum_{i=1}^{K}U_i(R_i(t))\Big|\mathbf{\Theta}(t)\right],
\end{align}
\end{footnotesize}

\noindent where $B$ is an upper bound of $\mathbb{E}\left[\sum_{i=1}^{K}\frac{R_i^2(t)+b_i^2(t)}{2}\Big|\mathbf{\Theta}(t)\right]+\mathbb{E}\left[\frac{\left(\sum_{i=1}^{K}p_i(t)-P_{\mathrm{mean}}\right)^2}{2}\Big|\mathbf{\Theta}(t)\right]$. Please note that this value is bounded since $R_i(t)$, $b_i(t)$, and $p_i(t)$ are all bounded values.


Following the Lyapunov optimization approach, we need to ``opportunistically'' minimize the drift minus utility at each timeslot, denoted by $\mathrm{obj}(t)$,   i.e.,
the following term is minimized at each timeslot \cite{neely2006stochastic,neely2010stochastic}
\begin{align}
\nonumber&\mathrm{obj}(t)=\\
&\sum_{i=1}^{K}Q_i(t)R_i(t)-\sum_{i=1}^{K}Q_i(t)b_i(\mathbf{p}(t),\widehat{\mathbf{g}}(t))\\
\nonumber&+Z(t)\left(\sum_{i=1}^{K}p_i(t)-P_{\mathrm{mean}}\right)-V\sum_{i=1}^{K}U_i(R_i(t))\\
\label{formula_obj00}=&\sum_{i=1}^{K}\left[Q_i(t)R_i(t)-VU_i(R_i(t))\right]\\
\nonumber&-\left[\sum_{i=1}^{K}Q_i(t)b_i(\mathbf{p}(t),\widehat{\mathbf{g}}(t))-Z(t)\left(\sum_{i=1}^{K}p_i(t)-P_{\mathrm{mean}}\right)\right].
\end{align}
Please note that $b_i(t)$ is a function of $\mathbf{p}(t)$ and $\widehat{\mathbf{g}}(t)$ as shown in (\ref{formula_NOMA0}), so that we use $b_i(\mathbf{p}(t),\widehat{\mathbf{g}}(t))$ to represent $b_i(t)$ in the above equation. As a consequence, we arrive at the following single-timeslot optimization, which is referred to as Problem Single-Timeslot (Problem PST)
\begin{subequations}
\begin{align}
\min_{\mathbf{p}(t),\mathbf{R}(t)} \ &\mathrm{obj}(t),\\
\textrm{subject to } &\sum_{i=1}^{K} p_i(t)\leq P_{\max},\\
&p_i(t)\geq 0, \forall i,\\
\label{formula_pst_con3}&0\leq R_i(t)\leq R_{\max}, \forall i.
\end{align}
\end{subequations}
According to (\ref{formula_obj00})--(\ref{formula_pst_con3}), we notice that in order to solve the Problem PST,  the BS only needs to know the queue states $Q_i(t), \forall i$, virtual queue state $Z(t)$, and channel gains $\widehat{g}_i(t), \forall i$, in the current timeslot $t$. Therefore, the solution to Problem PST is an online queue- and channel-aware solution.

\subsection{Optimality of Lyapunov Optimization}\label{subsection_optimality_Lyaponov}
Our original aim is to solve the long-term optimization Problem PO. By employing the Lyapunov optimization approach, we convert the problem into the online single-timeslot optimization Problem PST. If the single-timeslot optimization is solved optimally at each timeslot, according to \cite[Section 5.4]{neely2006stochastic} and \cite[Section 5.1]{neely2010stochastic}, the optimal solution to the original Problem PO can then be achieved asymptotically, following an $[O(V), O(\frac{1}{V})]$ tradeoff between the queue backlogs and the achieved utility given below
\begin{align}
\label{formula_tradeoff1}\lim_{T\rightarrow \infty} \frac{1}{T}\sum_{t=0}^{T-1} \sum_{i=1}^{K}\mathbb{E}(Q_i(t))\leq& \frac{B}{\mu_{\mathrm{sym}}}+O(V),\\
\label{formula_tradeoff2}\sum_{i=1}^{K}U_i\left(\lim_{T\rightarrow \infty}\frac{1}{T}\sum_{t=0}^{T-1}\mathbb{E}[R_i(t)]\right)\geq& U^{*}-O\left(\frac{1}{V}\right),
\end{align}
where $U^{*}$ represents the maximum possible utility of Problem PO, $\mu_{\mathrm{sym}}$
 is a constant (predetermined by the system) representing the largest possible time average data rate that is simultaneously supportable to all users\footnote{$\mu_{\mathrm{sym}}$ represents the largest possible time average data rate that is simultaneously supportable to all user \cite[Definition 3.7 and Section 5.1.2]{neely2006stochastic}. Its physical meaning is explained as follows:
Suppose that the mean arrival rates of the $K$ users $(R_1(t), R_2(t),\ldots, R_K(t))$ are equal to $(\mu_1,\mu_2,\ldots,\mu_K)$. Then, $(\mu_1,\mu_2,\ldots,\mu_K)$ is in the network-layer capacity region if
these arrival rates can be stably supported by the network (i.e., all queues are stable), considering all possible strategies for choosing the control variables to affect scheduling and resource allocation. $(\mu_1,\mu_2,\ldots,\mu_K)$ is not in the network-layer capacity region if it is not possible to find any strategy to support these arrival rates. $\mu_{\mathrm{sym}}$ is the value such that $(\mu_{\mathrm{sym}}, \mu_{\mathrm{sym}},\ldots, \mu_{\mathrm{sym}})$ is in the capacity region but $(\mu_{\mathrm{sym}}+\epsilon, \mu_{\mathrm{sym}}+\epsilon,\ldots, \mu_{\mathrm{sym}}+\epsilon)$ is out of the capacity region for an arbitrary small $\epsilon$. For an arbitrary system,  $\mu_{\mathrm{sym}}$ exists and can be regarded as a predetermined constant.
 Please note that $\mu_{\mathrm{sym}}$ is usually a hidden value when we design the system. In this paper, we do not need to know $\mu_{\mathrm{sym}}$ to design rate control or power allocation. It is only used to show the $[O(V), O(\frac{1}{V})]$ tradeoff between the queue backlogs and the achieved utility.}, and $O(\cdot)$ represents the big O notation.

\subsection{Decomposition of Problem PST}

Since the objective function of Problem PST is in a sum form and the problem contains non-coupled constraints, Problem PST  can be decomposed into the following two subproblems. First, for each user $i$, we need to optimize
\begin{subequations}
\begin{align}
\max_{R_i(t)} \ &VU_i(R_i(t))-Q_i(t)R_i(t),\\
\textrm{subject to }&0\leq R_i(t)\leq R_{\max}.
\end{align}
\end{subequations}
The above problem is a rate control problem for user $i$. It is referred to as Problem RC-$i$ in the rest of this paper. Second, the following problem must also be solved
\begin{subequations}
\begin{align}
\max_{\mathbf{p}(t)} \ &\sum_{i=1}^{K}Q_i(t)b_i(\mathbf{p}(t),\widehat{\mathbf{g}}(t))-Z(t)\sum_{i=1}^{K}p_i(t),\\
\textrm{subject to } &\sum_{i=1}^{K} p_i(t)\leq P_{\max},\\
&p_i(t)\geq 0, \forall i.
\end{align}
\end{subequations}
This is a power allocation problem for all users. It is referred to as Problem PA in the rest of the paper. Problem PST is solved optimally if and only if Problems RC-$i$ and Problem PA are all solved optimally.

%

The solution to Problem RC-$i$ is straightforward. It is a single-variable optimization problem and the objective function is concave. The optimal $R_i(t)$ is simply equal to the solution to $V\frac{\partial{U_i}(t)}{\partial R_i(t)}=Q_i(t)$ (if it is grater than $R_{\max}$ or smaller than $0$, then the optimal $R_i(t)$ equals to $R_{\max}$ or $0$ respectively).

However, we can easily verify that Problem PA is a non-convex optimization problem, which cannot be solved through a standard method. For a general non-convex optimization problem, it is not guaranteed that there is an algorithm that can find a globally optimal solution within a polynomial computational complexity. Nevertheless,  in Section~\ref{section_DPPA}, we manage to develop two important structures for solving the problem, so that a globally optimal solution to Problem PA can be calculated within a polynomial computational complexity, through the proposed Dynamic Programming based Power Allocation (DPPA) algorithm.

\section{Dynamic Programming based Power Allocation}\label{section_DPPA}
In this section, we aim to solve the single-timeslot power allocation problem, i.e., Problem PA.
Our main contribution is to propose a new dynamic programming approach to derive its globally optimal solution. To this end, we first reformulate Problem PA for presentation convenience. Then, we explore two important properties of the optimization problem, which will be employed to construct the Bellman formula \cite[Chapter 15]{algorithmbook}. Finally, the Dynamic Programming based Power Allocation (DPPA) algorithm is proposed to find the optimal solution to Problem PA.

\subsection{Reformulation of Problem PA}
We first reformulate Problem PA to facilitate the presentation. First, problem PA is a single-timeslot optimization problem, which does not depend on other timeslots, so that we ignore the notation $t$ throughout this section. Second, without loss of generality, we assume that the channel gains follow $g_1\geq g_2\geq \ldots \geq g_K$. Let $\mathbf{g}\triangleq (g_1,g_2,\ldots, g_K)$. Please note that we do not make any specific assumptions on $Q_1, Q_2, \ldots, Q_K$, and $Z$ so that they could be arbitrary non-negative values. Also, without loss of generality, $W$ and $\tau$ are normalized to $1$ throughout this section. As a consequence, Problem PA is rewritten as
\begin{subequations}
\begin{align}
\nonumber\max_{\mathbf{p}} \ & Q_1 \log\left(1+\frac{p_1g_1}{\eta}\right)+Q_2 \log\left(1+\frac{p_2g_2}{p_1g_2+\eta}\right) +\ldots\\
\label{formula_PA_obj}&+ Q_K \log\left(1+\frac{p_Kg_K}{\sum_{j=1}^{K-1}p_jg_K+\eta}\right) -Z\left(\sum_{j=1}^{K}p_j\right),\\
\textrm{s.t. } & \sum_{j=1}^{K}p_j\leq P_{\max},\\
&p_k\geq 0, \forall k.
\end{align}
\end{subequations}
The physical meaning of the above Problem PA is interpreted as follows. First, the objective function (\ref{formula_PA_obj}) comprises a weighted sum of physical layer data rates of individual users. The weights are $Q_1, Q_2, \ldots, Q_K$ (i.e., the queue backlogs), which means that if more outstanding data are stored at queue $i$, then data transmission to user $i$ has a higher priority, and this is why its data rate is multiplied by the queue backlog $Q_i$. The objective function (\ref{formula_PA_obj}) also includes a penalty term $Z\cdot\sum_{j=1}^{K}p_j$, the product of $Z$ and total power consumption of the current timeslot. Recall from (\ref{formula_virtualqueue}) that $Z$ indicates the accumulated power debt. A larger $Z$ value will lead to a higher penalty in power consumption, so that less power will be consumed in the current timeslot, and thus the power debt could be kept finite, i.e., the long-term power constraint is satisfied. It is worth mentioning that the weighted sum rate maximization problem has been studied in \cite{Lei2015gc,Lei2016twc,Di2016twc}. However, their methods cannot address our power allocation problem due to the different objective function introduced by the considered long-term average power constraint.

In what follows, we introduce two important properties of Problem PA, which will enable to find a globally optimal solution.

\subsection{Finite-Point Structure}
In this subsection, we show the finite-point structure of Problem PA, which will greatly decrease the searching space for the optimal solution. Let $F$ denote the objective function (\ref{formula_PA_obj}).
\begin{lemma}
\textbf{Finite-Point Structure.} If the optimality of Problem PA is achieved, $\forall k\leq K$, the value of $p_{1}+p_{2}+\ldots+p_{k}$ can only be one from the following values in $\mathcal{E}$, where $\mathcal{E}=\bigg\{0; \frac{\eta (g_{j}Q_{j}-g_{i}Q_{i})}{g_{i}g_{j}(Q_{i}-Q_{j})}, \forall i, j, 1\leq i< j\leq K ; \frac{Q_{i}}{Z}-\frac{\eta}{g_{i}}, \forall i, 1\leq i\leq K ; P_{\max}\bigg\}$.
\end{lemma}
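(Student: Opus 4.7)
The plan is to reparametrize Problem PA in terms of the partial sums $s_k:=p_1+\cdots+p_k$ (with $s_0:=0$) and then, using KKT together with a telescoping identity across ``runs'' of indices on which $s_k$ is constant, show that each such constant value lies in $\mathcal{E}$. Because $1+p_k g_k/(s_{k-1}g_k+\eta)=(s_k g_k+\eta)/(s_{k-1}g_k+\eta)$, the objective rewrites as
\begin{equation*}
F=\sum_{k=1}^{K}Q_k\bigl[\log(s_k g_k+\eta)-\log(s_{k-1}g_k+\eta)\bigr]-Z s_K,
\end{equation*}
subject to $0=s_0\le s_1\le\cdots\le s_K\le P_{\max}$. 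At an optimum, let $S=\{i_1<\cdots<i_m\}=\{k:p_k>0\}$; then $s_k$ is piecewise constant in $k$, equal to $0$ on $\{k<i_1\}$ and to a value $\sigma_l$ on each run $\{i_l\le k<i_{l+1}\}$ (with $i_{m+1}:=K+1$). Hence it suffices to show $\sigma_l\in\mathcal{E}$ for every $l=1,\ldots,m$.

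Next I would introduce KKT multipliers $\lambda_k\ge 0$ for $s_{k-1}-s_k\le 0$ and $\mu\ge 0$ for $s_K-P_{\max}\le 0$. Stationarity gives $\partial F/\partial s_k=\lambda_{k+1}-\lambda_k$ for $k<K$ and $\partial F/\partial s_K=\mu-\lambda_K$, with $\lambda_{i_l}=0$ by complementary slackness. Summing stationarity across a middle run $j=i_l,\ldots,i_{l+1}-1$ makes both sides telescope: the right-hand side collapses to $\lambda_{i_{l+1}}-\lambda_{i_l}=0$, and the left-hand side, using that $s_j=\sigma_l$ throughout the run, collapses to
\begin{equation*}
\frac{Q_{i_l} g_{i_l}}{\sigma_l g_{i_l}+\eta}-\frac{Q_{i_{l+1}} g_{i_{l+1}}}{\sigma_l g_{i_{l+1}}+\eta}=0,
\end{equation*}
from which $\sigma_l=\eta(g_{i_{l+1}}Q_{i_{l+1}}-g_{i_l}Q_{i_l})/\bigl(g_{i_l}g_{i_{l+1}}(Q_{i_l}-Q_{i_{l+1}})\bigr)\in\mathcal{E}$. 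Summing instead over the final run $j=i_m,\ldots,K$ and picking up the extra $-Z$ term gives $Q_{i_m}g_{i_m}/(\sigma_m g_{i_m}+\eta)-Z=\mu$; if $\mu=0$ (the peak-power constraint is slack) this yields $\sigma_m=Q_{i_m}/Z-\eta/g_{i_m}\in\mathcal{E}$, while $\mu>0$ forces $\sigma_m=P_{\max}\in\mathcal{E}$. Combined with the initial-run value $0\in\mathcal{E}$, every $s_k$ lies in $\mathcal{E}$.

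The main obstacle I foresee is the degenerate case $Q_{i_l}=Q_{i_{l+1}}$, where the displayed closed form for $\sigma_l$ is ill-defined. There the telescoped left-hand side reduces to $Q_{i_l}\eta(g_{i_l}-g_{i_{l+1}})/\bigl((\sigma_l g_{i_l}+\eta)(\sigma_l g_{i_{l+1}}+\eta)\bigr)$, which has constant nonzero sign whenever $g_{i_l}\neq g_{i_{l+1}}$, contradicting stationarity; thus $\{i_l,i_{l+1}\}\subseteq S$ is impossible in that regime and no new candidate value is created. A parallel sign analysis handles the remaining coincidences $g_{i_l}=g_{i_{l+1}}$ (one of the two users can be dropped without loss), after which the finite catalogue of candidates for each $s_k$ is exactly $\mathcal{E}$.
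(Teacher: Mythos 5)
Your proposal is correct and is essentially the paper's own argument in different coordinates: the paper writes the KKT conditions in the $p_k$ variables, uses complementary slackness to equate $\partial F/\partial p_{i_l}$ across the support, and subtracts consecutive derivatives to isolate exactly the pairwise conditions $\frac{g_{i_l}Q_{i_l}}{\sigma g_{i_l}+\eta}=\frac{g_{i_{l+1}}Q_{i_{l+1}}}{\sigma g_{i_{l+1}}+\eta}$ and the terminal alternative ($\mu=0$ giving $Q_{i_M}/Z-\eta/g_{i_M}$, or $\mu>0$ giving $P_{\max}$), which is precisely what your telescoped run-sums in the $s_k$ variables produce. Your explicit treatment of the degenerate cases $Q_{i_l}=Q_{i_{l+1}}$ (and coincident gains) is a welcome addition that the paper's proof silently skips, but it does not change the substance of the argument.
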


\begin{proof}

To proceed, we first calculate the first-order derivative of the objective function given by
\begin{small}
\begin{align}\label{formula_opt_diff}
\begin{cases}
\frac{\partial F}{\partial p_{1}}=&\frac{g_1Q_1}{p_1g_1+\eta}-\frac{g_2Q_2}{p_1g_2+\eta}+\frac{g_2Q_2}{(p_1+p_2)g_2+\eta}-\frac{g_3Q_3}{(p_1+p_2)g_3+\eta}+\\
&\ldots+\frac{g_{K-1}Q_{K-1}}{(p_1+\ldots+p_{K-1})g_{K-1}+\eta}-\frac{g_KQ_K}{(p_1+\ldots+p_{K-1})g_K+\eta}\\
&+\frac{g_KQ_K}{(p_1+\ldots+p_{K})g_K+\eta}-Z,\\
\frac{\partial F}{\partial p_{2}}=&\frac{g_2Q_2}{(p_1+p_2)g_2+\eta}-\frac{g_3Q_3}{(p_1+p_2)g_3+\eta}+\ldots\\
&+\frac{g_{K-1}Q_{K-1}}{(p_1+\ldots+p_{K-1})g_{K-1}+\eta}-\frac{g_KQ_K}{(p_1+\ldots+p_{K-1})g_K+\eta}\\
&+\frac{g_KQ_K}{(p_1+\ldots+p_{K})g_K+\eta}-Z,\\
 &\ldots\\
\frac{\partial F}{\partial p_{K-1}}=&\frac{g_{K-1}Q_{K-1}}{(p_1+\ldots+p_{K-1})g_{K-1}+\eta}-\frac{g_KQ_K}{(p_1+\ldots+p_{K-1})g_K+\eta}\\
&+\frac{g_KQ_K}{(p_1+\ldots+p_{K})g_K+\eta}-Z,\\
\frac{\partial F}{\partial p_{K}}=&\frac{g_KQ_K}{(p_1+\ldots+p_{K})g_K+\eta}-Z.
\end{cases}
\end{align}
\end{small}

If the optimality is achieved, the following Karush-Kuhn-Tucker (KKT) condition must be satisfied\footnote{Please note that the KKT condition does not guarantee global optimality, but global optimality implies the KKT condition. }.
\begin{align}\label{formula_KKT_condition}
\begin{cases}
-\frac{\partial F}{\partial p_k}-\lambda_k+\mu=0,\forall k,\\
\lambda_k=0 \textrm{ or }p_k=0, \forall k,\\
\mu=0 \textrm{ or } \sum_{j=1}^{K} p_j= P_{\max}.
\end{cases}
\end{align}
where $\lambda_k$ is the Lagrange multiplier associated with $p_k\geq 0$, and $\mu$ is the Lagrange multiplier associated with $\sum_{j=1}^{K} p_j\leq P_{\max} $.

At the optimal solution, if at least one entry of $\mathbf{p}$ is greater than zero\footnote{If all entries of the optimal $\mathbf{p}$ equal to zero, $\forall k\leq K$, the values of $p_{1}+p_{2}+\ldots+p_{k}$ are always $0$, so that the Lemma 1 is trivially true. This scenario can be ignored in this proof. }, we have $i_1<i_2<\ldots<i_M\in \{1,2,\ldots, K\}, M\geq 1$, $p_{i_1}>0, p_{i_2}>0, \ldots, p_{i_M}>0$. $\{i_1,i_2,\ldots, i_M\}$ is an arbitrary non-empty subset of $\{1,2,\ldots, K\}$. Then, following the KKT condition, we have $\lambda_{i_1}=\lambda_{i_2}=\ldots=\lambda_{i_M}=0$ and thus
\begin{align}\label{formula_opt_condition1}
\frac{\partial F}{\partial p_{i_1}}=\frac{\partial F}{\partial p_{i_2}}=\ldots=\frac{\partial F}{\partial p_{i_M}}.
\end{align}
If $\mu\neq0$, we have
\begin{align}\label{formula_opt_condition2_1}
p_{i_1}+p_{i_2}+\ldots+p_{i_{M-1}}=P_{\max}.
\end{align}
Otherwise, $\mu=0$, and we have
\begin{align}\label{formula_opt_condition2_2}
\frac{\partial F}{\partial p_{i_1}}=\frac{\partial F}{\partial p_{i_2}}=\ldots=\frac{\partial F}{\partial p_{i_M}}=0.
\end{align}
As a consequence, either (\ref{formula_opt_condition2_1}) or (\ref{formula_opt_condition2_2}) stands.

Substituting $p_j=0, j\neq i_1,\ldots,i_M $, into (\ref{formula_opt_diff}), and ignoring those lines regarding to $\frac{\partial F}{\partial p_j}, j\neq i_1,\ldots,i_M $, we have
\begin{small}
\begin{align}\label{formula_opt_diff2}
\begin{cases}
\frac{\partial F}{\partial p_{i_1}}=&\frac{g_{i_1}Q_{i_1}}{p_{i_1}g_{i_1}+\eta}-\frac{g_{i_2}Q_{i_2}}{p_{i_1}g_{i_2}+\eta}\\
&+\frac{g_{i_2}Q_{i_2}}{(p_{i_1}+p_{i_2})g_{i_2}
+\eta}-\frac{g_{i_3}Q_{i_3}}{(p_{i_1}+p_{i_2})g_{i_3}+\eta}+\ldots\\
&+\frac{g_{{i_{M-1}}}Q_{i_{M-1}}}{(p_{i_1}+\ldots+p_{i_{M-1}})g_{i_{M-1}}+\eta}-\frac{g_{i_{M}}Q_{i_{M}}}{(p_{i_1}+\ldots+p_{i_{M-1}})g_{i_{M}}+
\eta}\\
&+\frac{g_{i_{M}}Q_{i_{M}}}{(p_{i_1}+\ldots+p_{i_{M}})g_{i_{M}}+\eta}-Z,\\
\frac{\partial F}{\partial p_{i_2}}=&\frac{g_{i_2}Q_{i_2}}{(p_{i_1}+p_{i_2})g_{i_2}
+\eta}-\frac{g_{i_3}Q_{i_3}}{(p_{i_1}+p_{i_2})g_{i_3}+\eta}+\ldots\\
&+\frac{g_{{i_{M-1}}}Q_{i_{M-1}}}{(p_{i_1}+\ldots+p_{i_{M-1}})g_{i_{M-1}}+\eta}-\frac{g_{i_{M}}Q_{i_{M}}}{(p_{i_1}+\ldots+p_{i_{M-1}})g_{i_{M}}+
\eta}\\
&+\frac{g_{i_{M}}Q_{i_{M}}}{(p_{i_1}+\ldots+p_{i_{M}})g_{i_{M}}+\eta}-Z,\\
&\ldots\\
\frac{\partial F}{\partial p_{i_{M-1}}}=&\frac{g_{{i_{M-1}}}Q_{i_{M-1}}}{(p_{i_1}+\ldots+p_{i_{M-1}})g_{i_{M-1}}+\eta}-\frac{g_{i_{M}}Q_{i_{M}}}{(p_{i_1}+\ldots+p_{i_{M-1}})g_{i_{M}}+
\eta}\\
&+\frac{g_{i_{M}}Q_{i_{M}}}{(p_{i_1}+\ldots+p_{i_{M}})g_{i_{M}}+\eta}-Z,\\
\frac{\partial F}{\partial p_{i_{M}}}=&\frac{g_{i_{M}}Q_{i_{M}}}{(p_{i_1}+\ldots+p_{i_{M}})g_{i_{M}}+\eta}-Z.
\end{cases}
\end{align}
\end{small}

By combining (\ref{formula_opt_condition1}) into (\ref{formula_opt_diff2}), we have
\begin{align}
\begin{cases}
\frac{g_{i_1}Q_{i_1}}{p_{i_1}g_{i_1}+\eta}-\frac{g_{i_2}Q_{i_2}}{p_{i_1}g_{i_2}+\eta}=0,\\
\frac{g_{i_2}Q_{i_2}}{(p_{i_1}+p_{i_2})g_{i_2}+\eta}-\frac{g_{i_3}Q_{i_3}}{(p_{i_1}+p_{i_2})g_{i_3}+\eta}=0,\\
\ldots\\
\frac{g_{{i_{M-1}}}Q_{i_{M-1}}}{(p_{i_1}+\ldots+p_{i_{M-1}})g_{i_{M-1}}+\eta}-\frac{g_{i_{M}}Q_{i_{M}}}{(p_{i_1}+\ldots+p_{i_{M-1}})g_{i_{M}}+
\eta}=0,
\end{cases}
\end{align}
which leads to
\begin{align}\label{formula_opt_possibility1}
\begin{cases}
p_{i_1}=\frac{\eta (g_{i_2}Q_{i_2}-g_{i_1}Q_{i_1})}{g_{i_1}g_{i_2}(Q_{i_1}-Q_{i_2})},\\
p_{i_1}+p_{i_2}=\frac{\eta (g_{i_3}Q_{i_3}-g_{i_2}Q_{i_2})}{g_{i_2}g_{i_3}(Q_{i_2}-Q_{i_3})},\\
\ldots\\
p_{i_1}+p_{i_2}+\ldots+p_{i_{M-1}}=\frac{\eta (g_{i_M}Q_{i_M}-g_{i_{M-1}}Q_{i_{M-1}})}{g_{i_{M-1}}g_{i_M}(Q_{i_{M-1}}-Q_{i_M})}.\\
\end{cases}
\end{align}

By combining (\ref{formula_opt_condition2_1}) and (\ref{formula_opt_condition2_2}) into the last line of (\ref{formula_opt_diff2}), we have either
\begin{align}\label{formula_opt_possibility2}
\frac{\partial F}{\partial p_{i_M}}&=\frac{g_{i_{M}}Q_{i_{M}}}{(p_{i_1}+\ldots+p_{i_{M}})g_{i_{M}}+\eta}-Z=0,
\end{align}
or
\begin{align}\label{formula_opt_possibility3}
p_{i_1}+\ldots+p_{i_{M}}=P_{\max}.
\end{align}

Therefore, by combining (\ref{formula_opt_possibility1}), (\ref{formula_opt_possibility2}), and (\ref{formula_opt_possibility3}),
$p_{i_1}+p_{i_2}+\ldots+p_{i_m}$, $m\leq M$ can only possibly be one from the following values in $\mathcal{E}(i_1,i_2,\ldots, i_M) =\bigg\{\frac{\eta (g_{i_2}Q_{i_2}-g_{i_1}Q_{i_1})}{g_{i_1}g_{i_2}(Q_{i_1}-Q_{i_2})},
\frac{\eta (g_{i_3}Q_{i_3}-g_{i_2}Q_{i_2})}{g_{i_2}g_{i_3}(Q_{i_2}-Q_{i_3})},\ldots,\\
\frac{\eta (g_{i_M}Q_{i_M}-g_{i_{M-1}}Q_{i_{M-1}})}{g_{i_{M-1}}g_{i_M}(Q_{i_{M-1}}-Q_{i_M})}, \frac{Q_{i_M}}{Z}-\frac{\eta}{g_{i_M}}, P_{\max}\bigg\}$.

 Finally, $\{i_1,i_2,\ldots, i_M\}$ could be an arbitrary subset of $\{1,2,\ldots, K\}$. Thus, $p_{1}+p_{2}+\ldots+p_{k}$ can only possibly be $0$ or one value in  $\bigcup_{\textrm{all possible } i_1,i_2,\ldots, i_M} \mathcal{E}(i_1,i_2,\ldots, i_M)$, which is equivalent to  $\bigg\{0; \frac{\eta (g_{j}Q_{j}-g_{i}Q_{i})}{g_{i}g_{j}(Q_{i}-Q_{j})}, \forall i, j, 1\leq i< j \leq K; \frac{Q_{i}}{Z}-\frac{\eta}{g_{i}}, \forall i, 1\leq i \leq K$; $P_{\max}\bigg\}$. This completes the proof.

%
%
%
%
%

\end{proof}

Lemma 1 shows that when the optimality of Problem PA is achieved, the value of $p_{1}+p_{2}+\ldots+p_{k}, \forall k\leq K$, must be one from the candidate values in $\mathcal{E}$. This is because if a value of $p_{1}+p_{2}+\ldots+p_{k}, \forall k\leq K$, is not in $\mathcal{E}$, the KKT condition is not satisfied and $\mathbf{p}$ is not an optimal solution. Such property greatly decreases the searching space for the optimal solution, which will then be employed as a key in the DPPA algorithm.

Please note that if a value in $\mathcal{E}$ is greater than $P_{\max}$ or smaller than $0$,  $p_{1}+p_{2}+\ldots+p_{k}, \forall k\leq K$, cannot be equal to that value. Therefore, we can eliminate such values from $\mathcal{E}$. Let  $\widetilde{\mathcal{E}}\triangleq \{e\in\mathcal{E}| 0\leq e\leq P_{\max} \} $. We have the following corollary

\begin{corollary}\label{corollary}
If the optimality of Problem PA is achieved, the value of $p_{1}+p_{2}+\ldots+p_{k}, \forall k\leq K$, can only be one from those in $\widetilde{\mathcal{E}}$.
\end{corollary}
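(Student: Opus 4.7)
The plan is to derive Corollary~\ref{corollary} as an immediate refinement of Lemma~1 by imposing the feasibility constraints of Problem~PA onto the candidate set $\mathcal{E}$. Lemma~1 already guarantees that any optimal $\mathbf{p}$ has every partial sum $p_{1}+p_{2}+\ldots+p_{k}$ lying in $\mathcal{E}$; what remains is to observe that the constraints of Problem~PA force these partial sums to lie in the interval $[0,P_{\max}]$, so that any element of $\mathcal{E}$ outside this interval can be safely discarded.

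First I would invoke Lemma~1 to conclude that for any optimal $\mathbf{p}$ and any $k\leq K$, the partial sum $p_{1}+p_{2}+\ldots+p_{k}$ belongs to $\mathcal{E}$. Next I would use the non-negativity constraint $p_{j}\geq 0$ for all $j$, which directly implies $p_{1}+p_{2}+\ldots+p_{k}\geq 0$. Then I would combine the non-negativity with the peak power constraint $\sum_{j=1}^{K}p_{j}\leq P_{\max}$ to obtain $p_{1}+p_{2}+\ldots+p_{k}\leq \sum_{j=1}^{K}p_{j}\leq P_{\max}$. Hence every partial sum simultaneously satisfies $0\leq p_{1}+\ldots+p_{k}\leq P_{\max}$ and belongs to $\mathcal{E}$, which is exactly the definition of membership in $\widetilde{\mathcal{E}}=\{e\in\mathcal{E}\mid 0\leq e\leq P_{\max}\}$.

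There is essentially no technical obstacle here: the proof is a one-line observation grafted onto Lemma~1. The only thing worth being careful about is not to double-count or accidentally exclude the boundary values $0$ and $P_{\max}$; since $\mathcal{E}$ itself already contains $0$ and $P_{\max}$ by construction, and since the inequalities above are non-strict, these endpoints are correctly retained in $\widetilde{\mathcal{E}}$. I would close the proof by remarking that this tightened candidate set is what the dynamic programming procedure in the subsequent DPPA algorithm will actually enumerate, motivating why the reduction from $\mathcal{E}$ to $\widetilde{\mathcal{E}}$ is computationally valuable.
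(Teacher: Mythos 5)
Your proposal is correct and matches the paper's own (brief) justification exactly: Lemma~1 places every partial sum in $\mathcal{E}$, and the feasibility constraints $p_j\geq 0$ and $\sum_{j=1}^{K}p_j\leq P_{\max}$ confine each partial sum to $[0,P_{\max}]$, so it must lie in $\widetilde{\mathcal{E}}=\{e\in\mathcal{E}\mid 0\leq e\leq P_{\max}\}$. No further comment is needed.
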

In the rest of this paper, let $L=|\widetilde{\mathcal{E}}|$ represent the number of elements in $\widetilde{\mathcal{E}}$. $L\leq \frac{K(K-1)}{2}+K+2$ since there are $\frac{K(K-1)}{2}+K+2$ elements in $\mathcal{E}$ by definition.

\subsection{Incremental Structure}

In this subsection, we present the incremental structure in the optimal solution, which will help to construct the Bellman equation to be used in the dynamic programming.

For convenience, we define
\begin{align}
&f_1(p_1)\triangleq Q_1 \log\left(1+\frac{p_1g_1}{\eta}\right)-Z\cdot p_1,\\
&f_2(p_1, p_2)\triangleq Q_2\log\left(1+\frac{p_2g_2}{p_1g_2+\eta}\right)-Z\cdot p_2,\\
\nonumber&\ldots\\
\nonumber &f_K(p_1+p_2+\ldots+p_{K-1}, p_K)\triangleq \\
&Q_K \log\left(1+\frac{p_Kg_K}{\sum_{j=1}^{K-1}p_jg_K+\eta}\right)-Z\cdot p_K.
\end{align}

Let
\begin{align}
\nonumber G_{k}(p_1,p_2,\ldots, p_k)\triangleq &f_1(p_1)+f_2(p_1, p_2)+\ldots \\
+&f_k(p_1+p_2+\ldots+p_{k-1}, p_k), \forall k.
\end{align}

Then, we have
\begin{align}
\nonumber G_{k}(p_1,p_2,\ldots, p_k)= &G_{k-1}(p_1,p_2,\ldots, p_{k-1})\\
+&f_k(p_1+p_2+\ldots+p_{k-1}, p_k).
\end{align}
Please note that $G_{K}(p_1,p_2,\ldots, p_K)=F(p_1,p_2,\ldots, p_K)$, the objective function of Problem PA.

We notice that if we want to obtain the optimal solution to PA, we can solve them by gradually improving the objective function, using the following incremental structure.

\begin{lemma}\textbf{Incremental Structure.}
If $p_1^*,p_2^*,\ldots, p_k^*$, $k\geq 2$, is an optimal solution to the following problem
\begin{subequations}
\begin{align}
\label{formula_lemma2_obj1_1}\max_{p_1,p_2,\ldots,p_k} &G_{k}(p_1,p_2,\ldots, p_k), \\
\label{formula_lemma2_obj1_2}\textrm{subject to } &p_1+p_2+p_3+\ldots+p_k= A,\\
\label{formula_lemma2_obj1_3}&p_j\geq0, \forall j=1,2,\ldots k,
\end{align}
\end{subequations}
where $A$ is some non-negative value, then, $p_1^*,p_2^*,\ldots, p_{k-1}^*$ is an optimal solution to the following problem
\begin{subequations}
\begin{align}
\label{formula_lemma2_obj2_1}\max_{p_1,p_2,\ldots,p_{k-1}} &G_{k-1}(p_1,p_2,\ldots, p_{k-1}),\\
\label{formula_lemma2_obj2_2}\textrm{subject to } &p_1+p_2+p_3+\ldots+p_{k-1}\\
\nonumber&= p_1^*+p_2^*+p_3^*+\ldots+p_{k-1}^*,\\
\label{formula_lemma2_obj2_3}&p_j\geq0, \forall j=1,2,\ldots k-1.
\end{align}
\end{subequations}
\end{lemma}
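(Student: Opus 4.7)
The plan is to prove the lemma by contradiction, exploiting the structural observation already hinted at in the preceding display: $G_k$ admits the additive decomposition $G_k(p_1,\ldots,p_k) = G_{k-1}(p_1,\ldots,p_{k-1}) + f_k(p_1+\cdots+p_{k-1},\, p_k)$, and, crucially, that the second summand $f_k$ depends on the variables $p_1,\ldots,p_{k-1}$ only through their sum. This is exactly what makes the substructure ``incremental'': once the partial sum $S^{*} \triangleq p_1^{*}+\cdots+p_{k-1}^{*}$ is fixed, the contribution of $f_k$ to $G_k$ is determined entirely by $S^{*}$ and $p_k^{*}$, regardless of how $S^{*}$ is split across the first $k-1$ components.

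With this observation in hand, I would suppose for contradiction that $(p_1^{*},\ldots,p_{k-1}^{*})$ fails to be optimal for problem (22). Then some feasible tuple $(p_1',\ldots,p_{k-1}')$ with $\sum_{j=1}^{k-1} p_j' = S^{*}$ and $p_j'\geq 0$ must achieve strict improvement, i.e., $G_{k-1}(p_1',\ldots,p_{k-1}') > G_{k-1}(p_1^{*},\ldots,p_{k-1}^{*})$. I would then form the candidate $k$-tuple $(p_1',\ldots,p_{k-1}',p_k^{*})$ and check that it satisfies the constraints of problem (21): the total sum is $S^{*}+p_k^{*}=A$, and every entry is nonnegative by construction.

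Applying the decomposition then gives $G_k(p_1',\ldots,p_{k-1}',p_k^{*}) = G_{k-1}(p_1',\ldots,p_{k-1}') + f_k(S^{*},p_k^{*}) > G_{k-1}(p_1^{*},\ldots,p_{k-1}^{*}) + f_k(S^{*},p_k^{*}) = G_k(p_1^{*},\ldots,p_k^{*})$, which contradicts the optimality of $(p_1^{*},\ldots,p_k^{*})$ for (21) and completes the proof. The argument is quite short because the nontrivial work is already bundled into the ``prefix-sum'' separability of $G_k$; the main (and very mild) subtlety is just verifying that swapping the first $k-1$ components preserves both the equality-sum constraint and the nonnegativity constraints of the original $k$-variable problem, both of which follow immediately from the fact that the new tuple shares the partial sum $S^{*}$ with the old one. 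Notably, no convexity, differentiability, or appeal to KKT is needed—the result holds purely as a consequence of the separable structure of $G_k$ under a fixed prefix-sum, which is exactly the property one would invoke to set up a Bellman recursion downstream.
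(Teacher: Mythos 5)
Your proof is correct and follows essentially the same route as the paper's: both argue by contradiction, replacing the prefix $(p_1^*,\ldots,p_{k-1}^*)$ with a strictly better feasible prefix of equal sum and appending $p_k^*$ to contradict optimality of the $k$-variable problem. Your write-up is in fact slightly cleaner, since you make explicit the key fact that $f_k$ depends on the first $k-1$ variables only through their sum, which the paper uses implicitly.
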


\begin{proof}
Suppose $p_1^*,p_2^*,\ldots, p_{k-1}^*$ is not an optimal solution to problem (\ref{formula_lemma2_obj2_1})--(\ref{formula_lemma2_obj2_3}), then we can find another solution $p_1^{**},p_2^{**},\ldots, p_{k-1}^{**}$ leading to a even larger value of the object function  (\ref{formula_lemma2_obj2_1}), $f_1(p_1^{**})+f_2(p_1^{**},p_2^{**})+\ldots+f_k(p_1^{**}+p_2^{**}+\ldots+p_{k-2}^{**},p_{k-1}^{**})>
f_1(p_1^{*})+f_2(p_1^{*},p_2^{*})+\ldots+f_k(p_1^{*}+p_2^{*}+\ldots+p_{k-2}^{*},p_{k-1}^{*})$.

Please note that $p_1^{**},p_2^{**},\ldots, p_{k-1}^{**}, p_k^{*}$ is also a feasible solution to (\ref{formula_lemma2_obj1_1})--(\ref{formula_lemma2_obj1_3}), however, such solution leads to a greater value in the objective function in (\ref{formula_lemma2_obj1_1}), i.e.,  $f_1(p_1^{**})+f_2(p_1^{**},p_2^{**})+f_2(p_1^{**}+p_2^{**},p_3^{**})+\ldots+f_k(p_1^{**}+p_2^{**}+\ldots+p_{k-1}^{**},p_k^{*})>
f_1(p_1^{*})+f_2(p_1^{*},p_2^{*})+f_2(p_1^{*}+p_2^{*},p_3^{*})+\ldots+f_k(p_1^{*}+p_2^{*}+\ldots+p_{k-1}^{*},p_k^{*})$, leading to a contradiction. Through this contradiction, Lemma 2 is proved.
\end{proof}

Lemma 2 implies that in order to solve the  optimization problem with the objective function $G_k(\cdot)$, we can first solve a reduced subproblem with the objective function $G_{k-1}(\cdot)$. Therefore, the original problem with objective function $G_K(\cdot)$ can then be solved recursively, through solving optimization problems with objective functions $G_{K-1}(\cdot), G_{K-2}(\cdot), \ldots, G_{1}(\cdot)$.
In what follows, we realize such idea by detailed derivations: We first derive the Bellman equation showing the recursion  of the optimality. Then, a dynamic programming approach is proposed to solve the Problem PA.


\subsection{Recursion of Optimality: Bellman Equation}

In the next step, we aim to solve the Problem PA via a dynamic programming approach. Intrinsically, dynamic programming is a mathematical optimization method. The original problem is broken down into simpler sub-problems in a recursive manner, so that the original problem can be solved optimally through recursively finding the optimal solutions to the sub-problems. The mathematical relation between the optimality of the problem and its sub-problems is characterized by the Bellman equation \cite[Chapter 15]{algorithmbook}.

As a prerequisite for dynamic programming, the Bellman equation  regarding to the optimal solution to Problem PA is derived in this subsection, which is achieved through taking the advantage of the structures derived by Lemmas 1 and 2.


We define the Subproblem-$l$-$k$ (SP-$l$-$k$) as follows
\begin{subequations}
\begin{align}
\max_{p_1,p_2,\ldots,p_k} &G_k(p_1,p_2,\ldots, p_k),\\
\textrm{subject to } &p_1+p_2+p_3+\ldots+p_{k}= \pi_l,\\
&p_j\geq0, \forall j=1,2,\ldots  k,\\
&p_{1}+p_{2}+\ldots+p_{k'}\in\widetilde{\mathcal{E}},\forall k'\leq k,
\end{align}
\end{subequations}
where $\pi_l$ is the $l$th smallest element in $\widetilde{\mathcal{E}}$. $\{\pi_1,\pi_2,\ldots,\pi_L\}$ is the sorted version (in ascending order) of $\widetilde{\mathcal{E}}$.
We define $H(l,k)$ as the  optimal $G_k$ value of SP-$l$-$k$, and let $\mathbf{p}_{opt}(l,k)=(p_{opt,1}(l,k),\ldots, p_{opt,k}(l,k))$ denote the optimal solution. Following Corollary 1, since $p_1+p_2+p_3+\ldots+p_{K}$ is one from the values in $\{\pi_1,\pi_2,\ldots,\pi_L\}$,  the optimal solution to Problem PA is the best one among the optimal solutions to SP-$1$-$K$, SP-$2$-$K$, \ldots, SP-$L$-$K$.

Then, we characterize  $H(l,k)$ through analyzing $H(l',k-1)$.
When SP-$l$-$k$ is optimized,  $p_1+p_2+p_3+\ldots+p_{k-1}$ must be equal to some $\pi_{l'}\in\widetilde{\mathcal{E}}$, $l'\leq l$. Given $l'$, due to Lemma 2, the following SP-$l'$-$(k-1)$ must be optimized first
\begin{subequations}
\begin{align}
\max_{p_1,p_2,\ldots,p_{k-1}} &G_{k-1}(p_1,p_2,\ldots, p_{k-1}),\\
\textrm{subject to } &p_1+p_2+p_3+\ldots+p_{k-1}= \pi_{l'},\\
&p_j\geq0, \forall j=1,2,\ldots  k-1,\\
&p_{1}+p_{2}+\ldots+p_{k'}\in\widetilde{\mathcal{E}},\forall k'\leq k-1.
\end{align}
\end{subequations}
However, at this stage, we do not know the value $l'$, so that we need to examine all possible $l'$ values. By solving all possible SP-$l'$-$(k-1)$, $\forall l'\leq l$, we can find the best one among them to achieve the optimal SP-$l$-$k$.

Given that SP-$l'$-$(k-1)$ is optimized (it implies $p_1+p_2+p_3+\ldots+p_{k-1}= \pi_{l'}$), and given the constraint $p_1+p_2+p_3+\ldots+p_{k}= \pi_{l}$ in SP-$l$-$k$,  $p_k$ is fixed at $\pi_l-\pi_{l'}$. Then the value of the objective function of SP-$l$-$k$ becomes $H(l',k-1)+f_k(\pi_{l'}, \pi_l-\pi_{l'})$ in this scenario. By examining all possible $l'$, $\forall l'\leq l$, $H(l,k)$ is the largest one among all possible $H(l',k-1)+f_k(\pi_{l'}, \pi_l-\pi_{l'})$ values, $l'\leq l$.

As a consequence,  we have the following Bellman equation showing the relationship between the optimal solution to SP-$l$-$k$ and its subproblems for $k\geq 2$
\begin{align}\label{formula_bellman}
H(l,k)=\max_{l'=1,2,\ldots, l}(H(l',k-1)+f_k(\pi_{l'},\pi_l-\pi_{l'})).
\end{align}

When $k=1$, the optimal value of the objective of SP-$l$-$1$ can be simply derived as $H(l,1)=f_1(\pi_l)$ following definition.

The optimal solution to Problem PA is $\max \left[H(1,K), H(2,K), \ldots, H(L,K)\right]$, which can be derived through recursively finding the optimal solutions to $H(l,K-1), H(l,K-2), \ldots, H(l,1), \forall l$, by employing dynamic programming. In the next subsection, we will formally propose the Dynamic Programming based Power Allocation (DPPA) algorithm to solve Problem PA. 


\begin{algorithm}[h]\label{algorim1}
\CommentSty{Find all possible $p_1+p_2+\ldots+p_k, k\leq K$, values and put them in $\widetilde{\mathcal{E}} $}\;\label{algoritmline_label1}
$\widetilde{\mathcal{E}} \leftarrow \{0, P_{\max}\}$\;
\For{$i\leftarrow 1$ \KwTo $K$}
{
    \For{$j\leftarrow i+1$ \KwTo $K$}
    {
        $p_a\leftarrow \frac{\eta (g_{j}Q_{j}-g_{i}Q_{i})}{g_{i}g_{j}(Q_{i}-Q_{j})} $\;
        \If{$0< p_a< P_{\max}$}
        {
            Put $p_a$ in $\mathcal{E}$\;
        }
    }

    $p_a\leftarrow \frac{Q_{i}}{Z}-\frac{\eta}{g_{i}}$\;
    \If{$0< p_a< P_{\max}$}
    {
        Put $p_a$ in $\widetilde{\mathcal{E}} $\;
    }

}
$L\leftarrow |\widetilde{\mathcal{E}} |$, i.e., the size of $\widetilde{\mathcal{E}} $\;
Sort all elements in $\widetilde{\mathcal{E}}$ in ascending order $\pi_1, \pi_2,\ldots, \pi_L$\; \label{algoritmline_label2}
\CommentSty{Dynamic programming using Bellman equation}\;  \label{algoritmline_label3}
Set $H(l,k)\leftarrow 0$, $\forall k=1,2,\ldots, K, \forall l=1,2,\ldots, L$\;
Set $H(l,1)\leftarrow f_1(\pi_l)$, $\mathbf{p}_{opt}(l,1)\leftarrow (\pi_l)$, $\forall l=1,2,\ldots, L$\;
\For{$k\leftarrow 2$ \KwTo $K$\label{algoritmline_loop1}}
{
    \For{$l\leftarrow 1$ \KwTo $L$ \label{algoritmline_loop2}}
    {
        $H(l,k)\leftarrow \max_{l'=1,2,\ldots,l}\left[H(l',k-1)+f_k(\pi_l',\pi_l-\pi_l')\right]$\; \label{algoritmline_loop3}
        $l_0\leftarrow \arg \max_{l'=1,2,\ldots,l}\left[H(l',k-1)+f_k(\pi_l',\pi_l-\pi_l')\right]$\; \label{algoritmline_loop4}
        $\mathbf{p}_{opt}(l,k)\leftarrow (\mathbf{p}_{opt}(l_0,k-1),\pi_l-\pi_{l_0})$;  \CommentSty{append $\pi_l-\pi_{l_0}$ in the end}\;
    }
}\label{algoritmline_label4}
$H_{opt}\leftarrow \max_{l=1,2,\ldots,L}H(l,K)$; \CommentSty{Optimal value of objective function of Problem PA}\;   \label{algoritmline_label5}
$l_0\leftarrow \arg \max_{l=1,2,\ldots,L}H(l,K)$\;
$\mathbf{p}_{opt}\leftarrow \mathbf{p}_{opt}(l_0,K)$; \CommentSty{Optimal power allocation of Problem PA.}   \label{algoritmline_label6}
\caption{Dynamic Programming based Power Allocation}
\end{algorithm}

\subsection{Dynamic Programming based Power Allocation (DPPA)}

In this subsection, we propose the DPPA formally described in Algorithm~\ref{algorim1} through the derived the Bellman equation given in (\ref{formula_bellman}).


%
%
%
%
%

%
%

%
%
%
%
%
%
%
%
%
%
%
In Lines~\ref{algoritmline_label1}--\ref{algoritmline_label2} of Algorithm 1, we construct the set $\widetilde{\mathcal{E}}$. In Lines~\ref{algoritmline_label3}--\ref{algoritmline_label4}, we derive the optimal $H(k,l)$ values via dynamic programming approach. Finally, the optimal solution to Problem PA is derived in Lines~\ref{algoritmline_label5}--\ref{algoritmline_label6}, since the optimal solution to Problem PA is the best one among the optimal solutions to SP-$1$-$K$, SP-$2$-$K$, \ldots, SP-$L$-$K$.

\subsection{Complexity Analysis}

The computational complexity of the above DPPA is characterized as follows. In Lines~\ref{algoritmline_loop1}--\ref{algoritmline_label4}, there are two levels of ``for'' loops. The outer level (Line~\ref{algoritmline_loop1}) introduces a computational complexity of $O(K)$, and the inner level (Line~\ref{algoritmline_loop2}) introduces a computational complexity of $O(L)$. Within the inner ``for'' loop (Lines \ref{algoritmline_loop3}--\ref{algoritmline_loop4}), we search for the maximum term among $l$ terms, and $l\leq L$, introducing another level of computational complexity of $O(L)$. Therefore, the overall computational complexity is $O(KL^2)$. $L$ is in the scale of $O(K^2)$ according to Corollary~\ref{corollary}, so that the overall computational complexity is $O(K^5)$, which is polynomial.


Note that for a general non-convex optimization problem, there are no standard methods to find a globally optimal solution within a polynomial computational complexity. In order to derive a globally optimal solution to Problem PA, one alternative method is to study the KKT condition of the optimization problem and exhaustively search for  all the solutions satisfying the KKT condition. Then, the best solution among them is selected as the globally optimal solution.

 By this method, the KKT condition is shown in (\ref{formula_KKT_condition}), and then we need to find all solutions satisfying (\ref{formula_KKT_condition}).
According to the second line of (\ref{formula_KKT_condition}), for each user $k$, we need to consider two possibilities: (1) $\lambda_k=0$ and (2) $p_k=0$ for this user. Since there are $K$ users, there are $2^K$ possibilities. In the third line of (\ref{formula_KKT_condition}), we need to consider another set of two possibilities. In sum, we need to exhaustively search for  $2\times2^{K}$ possibilities to find all solutions satisfying (\ref{formula_KKT_condition}), in order to find the best one among them. Therefore, the computational complexity is $O(2^{K})$, non-polynomial, without our proposed DPPA algorithm.


%

\section{Simulation Performance Evaluation}

In this section, we present simulation based performance evaluation of the proposed long-term resource allocation framework for NOMA. We first introduce the simulation setup and four benchmark schemes, and then demonstrate the performance gain of our DPPA-based NOMA scheme over these benchmark schemes. For simplification, the NOMA scheme proposed in this paper is referred to as NOMA-OPT in the rest of this section.

\begin{figure}[t]
\centering  \hspace{0pt}
\includegraphics[scale=0.45]{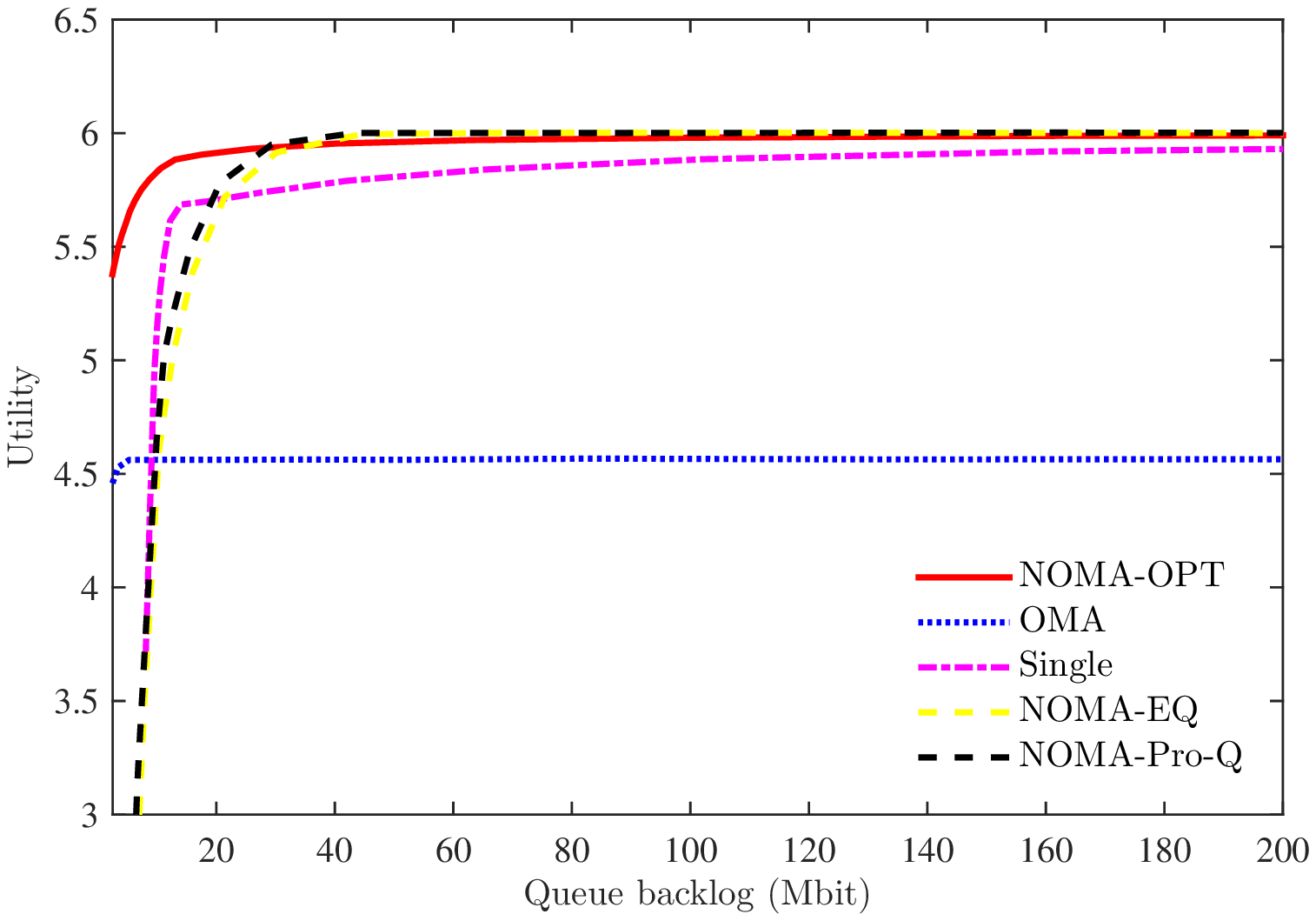}
\vspace{-0.2 cm}
\caption{Queue backlog versus network utility in Scenario 1.}
\label{figure_sim1}
\centering  \hspace{0pt}
\includegraphics[scale=0.45]{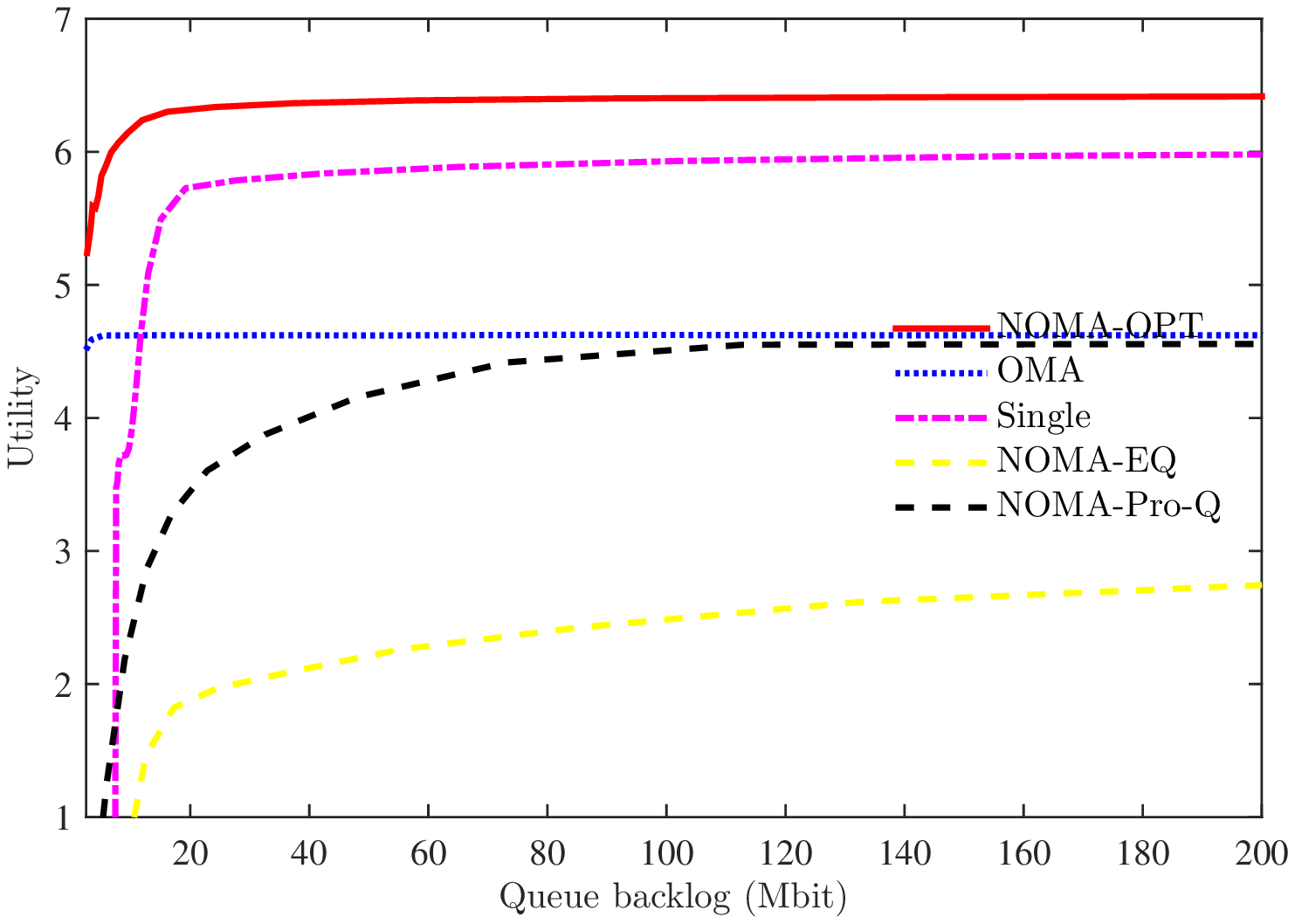}
\vspace{-0.2 cm}
\caption{Queue backlog versus network utility in Scenario 2.}
\label{figure_sim2}
\end{figure}

\subsection{Simulation Setup}
The channel gain from the BS to each user is computed as $\frac{1}{D^{\alpha}}\cdot g_0$, where $D^{\alpha}$ is the pathloss, $D$ (in meter) is the distance between the BS to the user, $\alpha=4$ is the pathloss exponent, and $g_0$ is independently exponentially distributed with a unit mean (corresponding to a normalized Rayleigh fading). We set $W=20$ MHz and $\tau=50$ ms. The utility function of each user is set to be $\ln(\textrm{data rate in Mbit per timeslot})$.  Each simulation point is averaged over $50000$ timeslots. We set $P_{\max}=33$ dBm, $P_{\mathrm{mean}}=30$ dBm, $R_{\max}=15$ Mbits, and the noise $\eta=-87$ dBm.  The initial queue backlogs for all users are $0$.

\subsection{Benchmark Schemes}
In this section, we compare the proposed NOMA-OPT scheme with four benchmark schemes as follows.
\subsubsection{Time-sharing Orthogonal Multiple Access (OMA) Scheme} In this scheme,  different users are operated on equally partitioned time durations in each timeslot. For fairness,  the OMA scheme has the same queueing model as the NOMA-OPT scheme: Problem PO is still the initial objective to be solved. The same Lyapunov optimization analysis can be performed, and PO is converted to a single-timeslot optimization, which can then be decomposed into Problems RC-$i$ and Problem PA. The only difference is that the objective function of Problem PA is replaced by  $\max_{\mathbf{p}} \ \frac{1}{K}Q_1 \log\Big(1+\frac{p_1g_1}{\eta}\Big)+\frac{1}{K}Q_2 \log\Big(1+\frac{p_2g_2}{\eta}\Big) +\ldots+ \frac{1}{K}Q_K \log\Big(1+\frac{p_Kg_K}{\eta}\Big)-Z\left(\sum_{i=1}^{K}p_i\right)$. We can verify that Problem PA for OMA is a convex optimization problem, which can be solved by a standard method (e.g., interior point method). The derived optimal solution is adopted in each timeslot.

\subsubsection{Single User Serving (Single) Scheme} In this scheme,   only one single user is served in each timeslot. For fairness,  the Single scheme has the same queueing model as the NOMA-OPT scheme.  The objective function of Problem PA 
does not change, but an additional constraint that at most one of $p_1, p_2,\ldots, p_K$ can be positive is added. Through this way, the resultant power is allocated to one ``best'' user in each timeslot.\footnote{Note that ``Single'' is not optimal to Problem PA.
 If $g_i>g_j$, then serving user $i$ instead of user $j$ will lead to a better solution. If $Q_i>Q_j$, we should give a higher priority to serve user $i$ since the queue backlog at user $i$ is larger. If  $g_i>g_j$ and $Q_i>Q_j$, serving user $i$ but not $j$ is optimal. However, if we have  $g_i>g_j$ and $Q_i<Q_j$, it is not straightforward to decide which user to serve, and how much power should be allocated to users $i$ and $j$. We can only solve the problem optimally through our proposed NOMA-OPT solution, which is one of our core contributions of this work.}

\subsubsection{NOMA with Equal Power Allocation (NOMA-EQ) scheme} In this scheme, NOMA is employed, but
power allocation is not optimized. In each timeslot, power is equally allocated to all users, and the total power allocated is equal to $P_{\mathrm{mean}}$. For fairness,  the NOMA-EQ scheme has the same queueing model as the NOMA-OPT scheme.

\subsubsection{NOMA with Power Allocation Proportional to Queue Backlog (NOMA-Pro-Q) scheme} In this scheme, NOMA is employed, but
power allocation is not optimized. In each timeslot, the power allocated to each user is proportional to the queue backlog of that
user ($Q_i(t)$), and the total power allocated is equal to $P_{\mathrm{mean}}$.  The NOMA-EQ scheme has the same queueing model as the NOMA-OPT scheme.

\subsection{Tradeoff between Utility and Queue Backlog}\label{subsection_sim_tradeoff1}

\begin{figure} \centering
\addtolength{\subfigcapskip}{-0.1cm}
\subfigure[Data rate. ] { \label{fig_sim3a}
\includegraphics[width=0.45\textwidth]{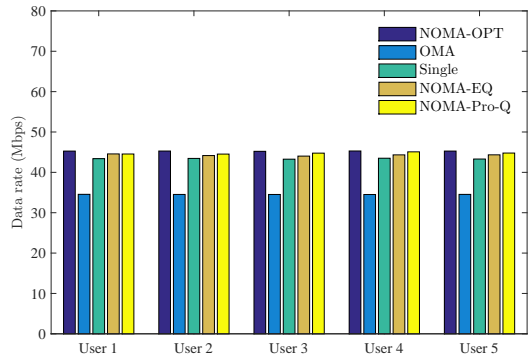}
}
\subfigure[Queueing delay.  ] { \label{fig_sim3b}
\includegraphics[width=0.45\textwidth]{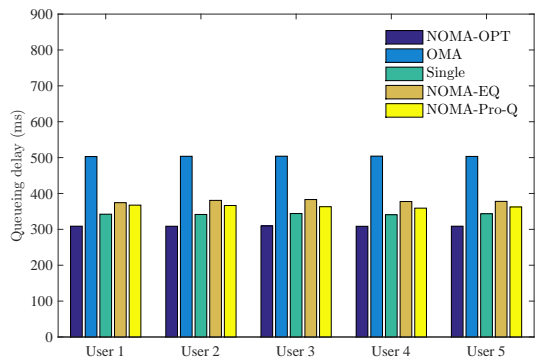}
}
\subfigure[Utility. ] { \label{fig_sim3c}
\includegraphics[width=0.45\textwidth]{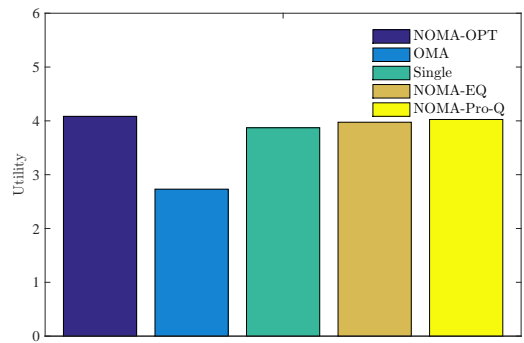}
}

\vspace{-0.2 cm}
\caption{Performance comparison in Scenario 1.}
\label{fig_sim3}
\end{figure}

\begin{figure} \centering
\addtolength{\subfigcapskip}{-0.1cm}
\subfigure[Data rate. ] { \label{fig_sim4a}
\includegraphics[width=0.45\textwidth]{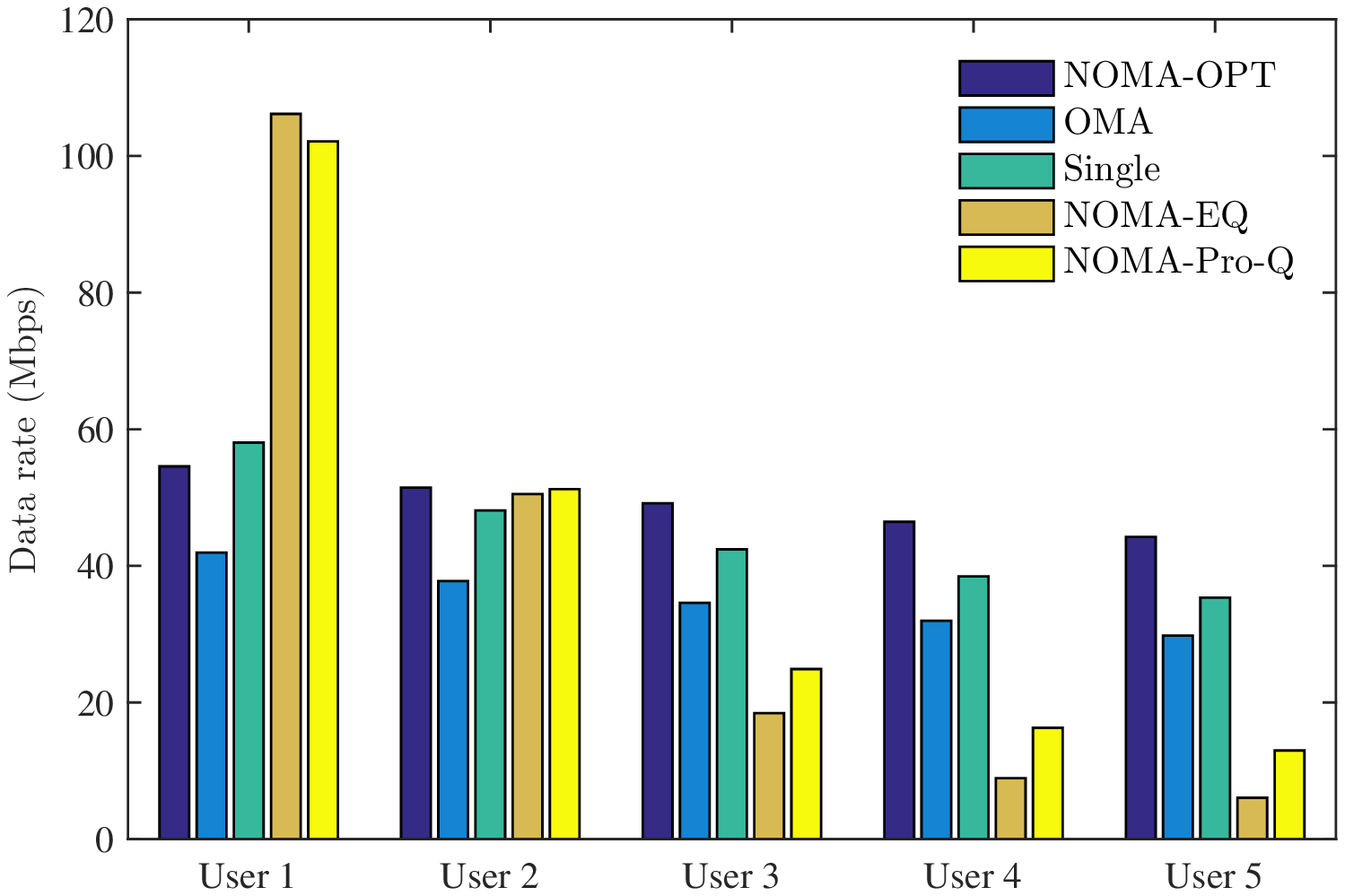}
}
\subfigure[Queueing delay. ] { \label{fig_sim4b}
\includegraphics[width=0.45\textwidth]{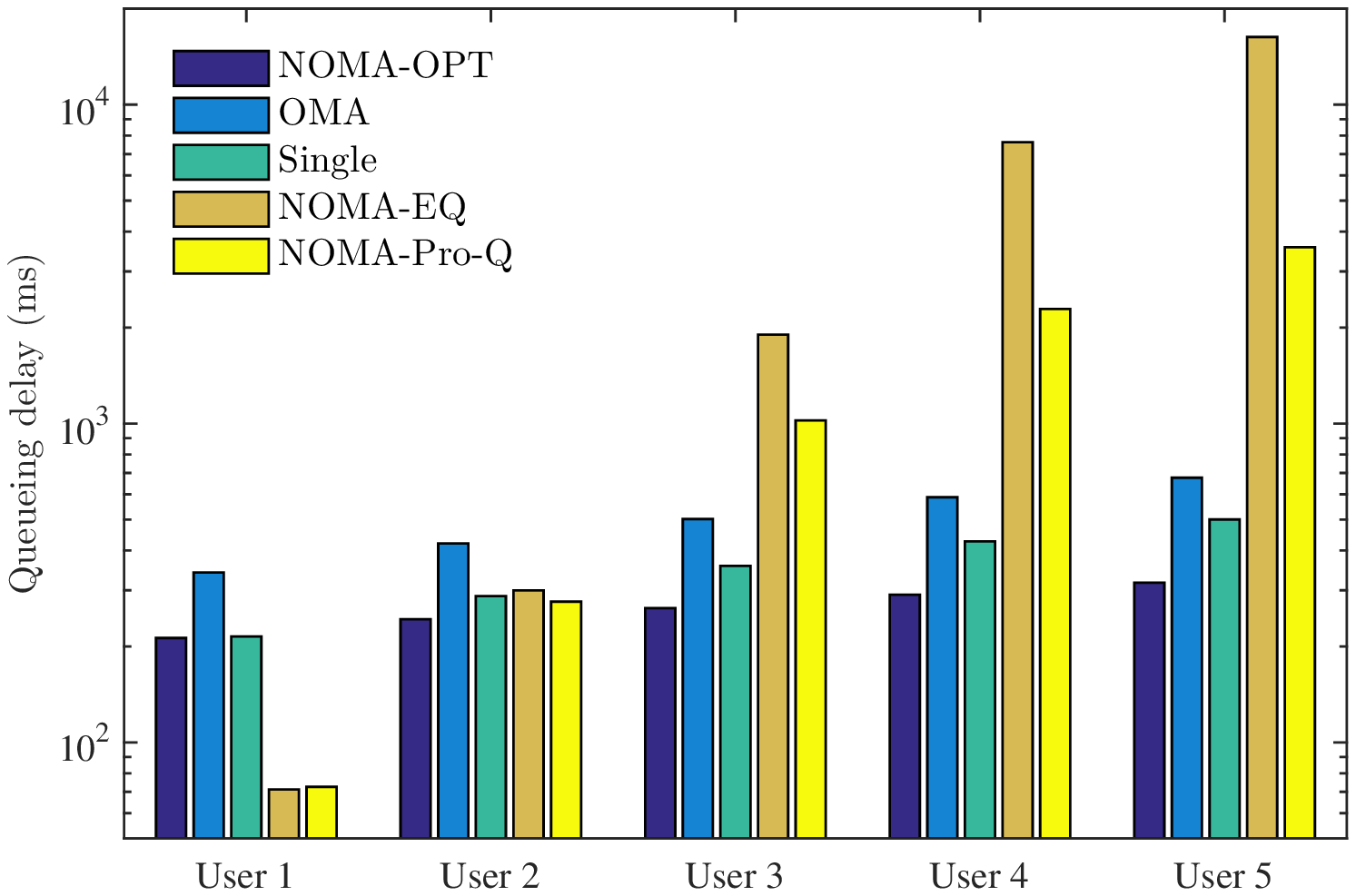}
}
\subfigure[Utility. ] { \label{fig_sim4c}
\includegraphics[width=0.45\textwidth]{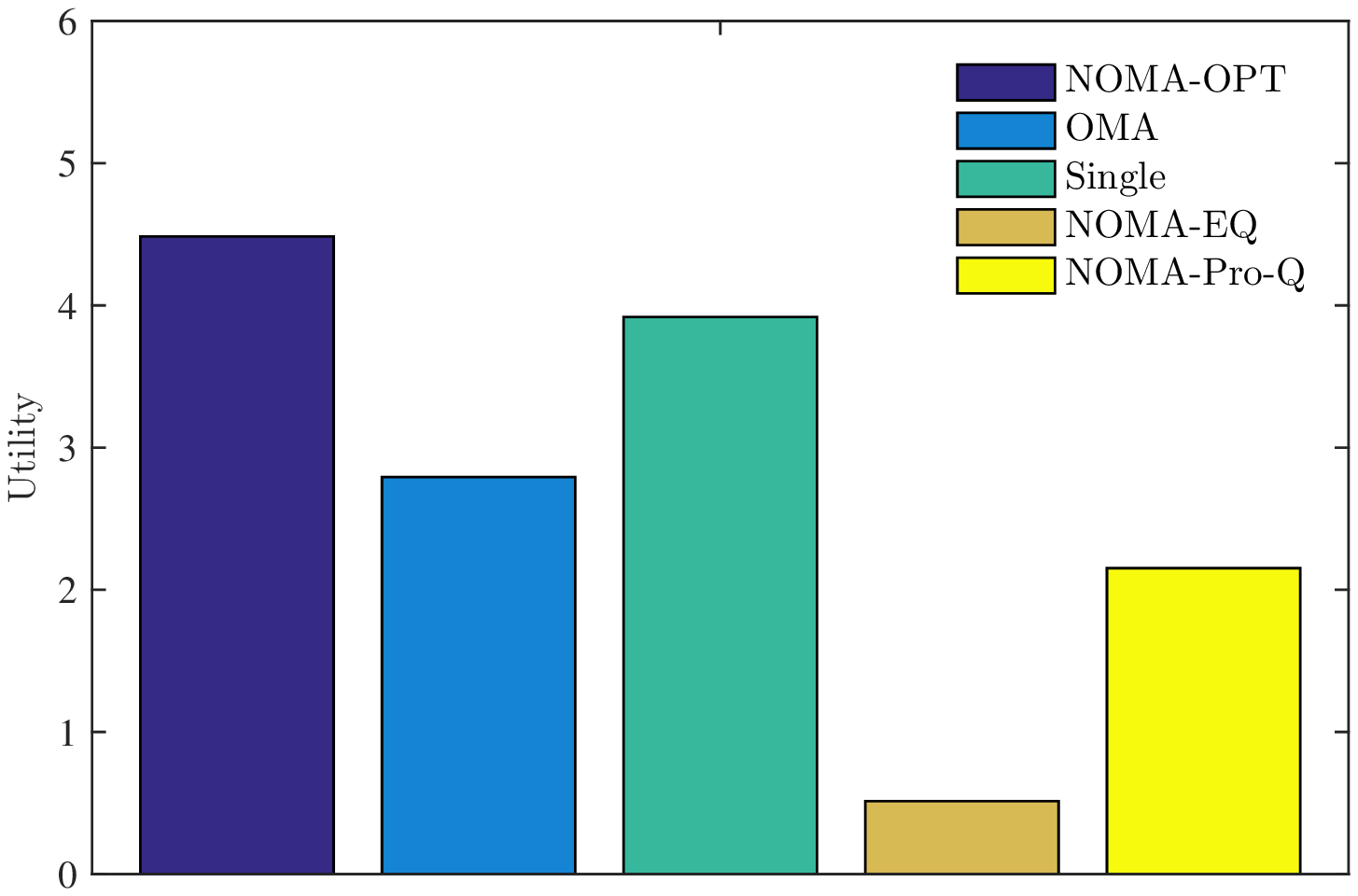}
}

\vspace{-0.2 cm}
\caption{Performance comparison in Scenario 2.}
\label{fig_sim4}
\end{figure}

First, we study the tradeoff between the long-term user utility and the average queue backlog $Q_i(t)$, when we adjust the value of $V$ from $10^{-1}$ to $10^3$. Recall that $V$ is a tunable value that achieves an $[O(V), O(\frac{1}{V})]$ tradeoff between the queue backlogs and utility, shown in (\ref{formula_tradeoff1})--(\ref{formula_tradeoff2}). We study two scenarios. There are $K=5$ users in both scenarios. In Scenario 1, all of the five users are $100$ meters from the BS; and in Scenario 2, the five users are $60$, $80$, $100$, $120$, and $140$ meters from the BS.

The tradeoff performance of NOMA-OPT and four benchmark schemes in Scenarios 1 and 2 is shown in Figs.~\ref{figure_sim1} and \ref{figure_sim2}.
First, we observe that when the queue backlog increases, the user utility increases fast at the beginning, but then gradually saturates. This matches our analysis that there is an $[O(V), O(\frac{1}{V})]$ tradeoff. Please note that here $O(V)$ indicates the size of queue backlog, and $O(\frac{1}{V})$ indicates the gap between the utility under $V$ and the maximum possible utility. Therefore, the utility asymptotically  approaches the maximum possible utility when we keep increasing the queue backlog. 

Second, in both figures, we observe that the performance of NOMA-OPT is much better than those of OMA and Single.
Given an arbitrary queue backlog, the overall utility of NOMA-OPT is higher than those of Single and OMA. This matches our expectation:
NOMA-OPT outperforms OMA because NOMA can better utilize the radio resource to achieve better data rates for all users. NOMA-OPT outperforms Single because Single gives a sub-optimal solution to Problem PA in each timeslot (since an additional constraint that at most one user can be served is added in Single).

Third, we observe that in Fig.~\ref{figure_sim1}, the performances of NOMA-OPT, NOMA-EQ, and NOMA-Pro-Q schemes are close to each other. This is because in Scenario 1, the distances from the five users to the BS are the same, so that the five users are symmetric. Equal power allocation is already a sufficiently good solution to these five symmetric users. In this scenario, for NOMA-Pro-Q scheme, the five users experience almost the same queue backlogs, so that NOMA-Pro-Q is almost equivalent to NOMA-EQ. However, in Fig.~\ref{figure_sim2} when the distances from the five users to the BS are different (which is most likely in reality), NOMA-OPT substantially outperforms NOMA-EQ and NOMA-Pro-Q. This is because NOMA-EQ does not allocate sufficient power to users far from the BS, so that these users experience very low data rate, leading to low log utility. Compared with NOMA-EQ, NOMA-Pro-Q  improves the log utility by allocating more power to users farther away. However, the improved performance is still much worse than that of NOMA-OPT.

Finally, we notice that without wisely devising the power allocation scheme for NOMA, the benefit of NOMA cannot be realized. The performances of NOMA-EQ and NOMA-Pro-Q are even much worse than that of OMA in Fig.~\ref{figure_sim2}. Such observation further demonstrates the significance of our proposed NOMA-OPT scheme.


\subsection{User Data Rate and Queueing Delay}
We also study the average user data rate and average queueing delay\footnote{It is the average time of a data bit staying in queue $Q_i(t)$. In the simulation, we track the timestamps at which each data bit enters and leaves the queue $Q_i(t)$, and compute the time difference. The result is averaged over all data bits in the simulation.} of the five users in Scenarios 1 and 2, under the five schemes. $V=30$ in this subsection. 
In Figs.~\ref{fig_sim3} and \ref{fig_sim4}, we show the data rate, average queueing delay, and the overall utility in subfigures (a), (b), and (c), respectively.
In Scenario 1, NOMA-OPT slightly improves the data rates and queueing delays for all users compared with Single, NOMA-EQ, and NOMA-Pro-Q, but substantially improves  data rates and queueing delays for all users compared with OMA. In Scenario 2, the advantage of NOMA-OPT is much more substantial. In particular, we observe
\begin{enumerate}
  \item Compared with OMA, the data rates and queueing delays for all users are substantially improved.
  \item Compared with Single, the data rate of user 1 is slightly smaller. This gives way to the performance gains of users 2, 3, 4, and 5, and thus the overall log utility shown in Fig.~\ref{fig_sim4c}. In addition, the delay performance of users 4 and 5 are substantially improved (Note that the $y$-axis of Fig.~\ref{fig_sim4b} is shown in log scale). This is because the performance gain of NOMA-OPT stems from the optimality in solving Problem PA (while Single gives a non-optimal solution).
  \item NOMA-EQ and NOMA-Pro-Q only provide large data rate and small delay to the closest user, while the data rate and delay performance of far users are very poor. Therefore, without our proposed NOMA-OPT approach, the performance NOMA could be unsatisfactory.
  \item  NOMA-OPT brings the highest performance enhancement to user 5 (the farthest user) compared with all the other four schemes. It suggests that edge users will get more benefits if NOMA-OPT is adopted.
\end{enumerate}


%

\subsection{Evolution of Queue Backlogs}
\begin{figure}[t]
\centering  \hspace{0pt}
\includegraphics[scale=0.53]{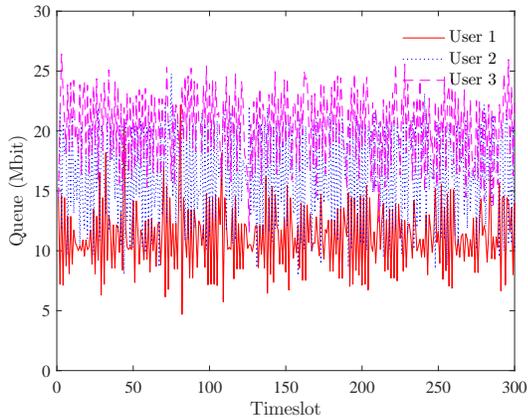}
\vspace{-0.2 cm}
\caption{Evolution of queue backlogs in Scenario 3. }
\label{figure_sim5}
\end{figure}

\begin{figure} \centering
\addtolength{\subfigcapskip}{-0.1cm}
\subfigure[Utility gain. ] { \label{fig_sim6a}
\includegraphics[width=0.45\textwidth]{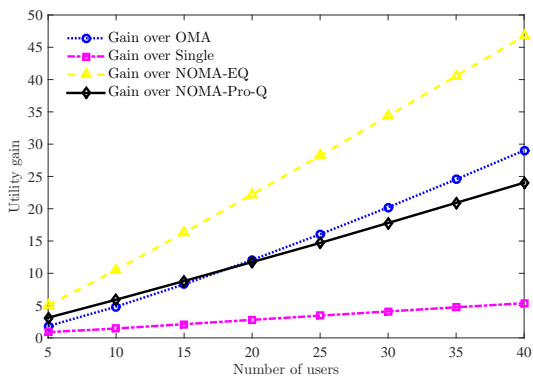}
}
\subfigure[Delay ratio. ] { \label{fig_sim6b}
\includegraphics[width=0.45\textwidth]{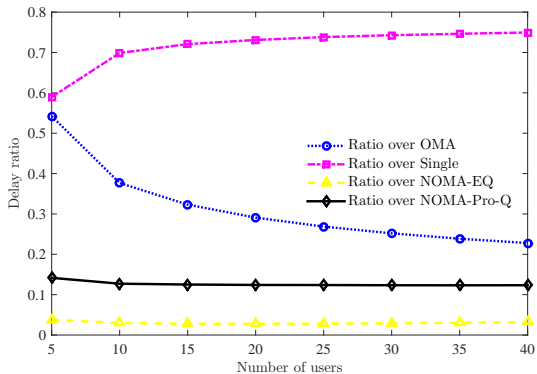}
}
\vspace{-0.2 cm}
\caption{Performance comparison under different numbers of users.}
\label{fig_sim6}
\end{figure}

In this subsection, we investigate the evolution of the queue backlogs and understand how they influence the system fairness. For illustration purpose, we study the following Scenario 3: There are $K=3$ users, and they are  $20$, $100$, and $200$ meters from the BS. Obviously, their distances to the BS and thus channel gains are quite different in this scenario. $V=50$ in this subsection.

Through the NOMA-OPT approach, the data rates of the three users are $215.65$,   $137.94$, and  $105.82$ Mbps respectively. Even if the channel gain of the furthest user is $40$ dB smaller than that of the closest user (on average), it can still achieve $48.8\%$ of data rate of the closest user, suggesting that the system fairness is well protected. This is aligned with the fact that the queue backlogs of the three users fluctuate around $11$,   $16$, and   $20$ respectively: user 3 is usually associated with a larger queue backlog and thus has a higher weight (priority) in the optimization problem in (\ref{formula_PA_obj}).

\subsection{Performance Gain under Different Numbers of Users}

Finally, we investigate the performance gain of NOMA-OPT over the four benchmark schemes under variable numbers of users. $V=20$ in this subsection. We consider scenarios where the number of users increases from $5$ to $40$. In each case, users are lined from $50$ meters to $150$ meters from the
BS with equal intervals. We study the utility gain of NOMA-OPT compared with each of the benchmark schemes (i.e., utility of NOMA-OPT minus utility of the benchmark scheme) and the delay ratio of NOMA-OPT over each of the benchmark schemes (i.e., average queueing delay of NOMA-OPT over average queueing delay of the benchmark scheme).
   The results shown in  Fig.~\ref{fig_sim6} suggest that NOMA-OPT brings substantial performance gain, in terms of both data rate and delay,  under a variety of user numbers in the system. In addition, the utility gain increases as the number of user increases. This is because the original Problem PO is solved asymptotically optimally by our proposed NOMA-OPT, but all of the benchmark schemes give non-optimal solutions. The more users in the system, a non-optimal solution has a higher chance to make a ``poor'' decision, and thus causes larger performance loss compared with NOMA-OPT.

\section{Conclusions}
In this paper, we developed a long-term resource allocation framework for a downlink multi-user non-orthogonal multiple access (NOMA) network, which jointly optimizes the rate control at the network layer as well as the power allocation at the physical layer to maximize the long-term network utility, subject to several practical long-term and short-term constraints on the power consumption and user queue stability. By resorting to the Lyapunov optimization theory, we attained the asymptotically optimal solution to the formulated long-term network utility maximization problem. To achieve this, we converted the utility maximization problem into a series of short-term rate control and power allocation problems to be optimized in each timeslot and successfully resolved these short-term problems by leveraging the special structures of the objective functions. Simulation results were presented to compare the performance of the proposed NOMA method with those of four benchmark schemes, including OMA and non-optimal NOMA schemes, under the same setups. The simulation results showed that compared to these benchmark schemes, the proposed NOMA method can achieve a higher network throughput, lower data delay, and better user fairness.
\ifCLASSOPTIONcaptionsoff
  \newpage
\fi
\bibliography{IEEEabrv,ref}
\bibliographystyle{IEEEtran}
\end{document}